%% LyX 2.3.0 created this file.  For more info, see http://www.lyx.org/.
%% Do not edit unless you really know what you are doing.
\documentclass[11pt,english]{article}

\usepackage[T1]{fontenc}
\usepackage[latin9]{inputenc}
\usepackage{geometry}
\geometry{verbose,tmargin=2.5cm,bmargin=2.5cm,lmargin=2.5cm,rmargin=2.5cm}
\setcounter{secnumdepth}{4}
\usepackage{color}
\usepackage{babel}
\usepackage{array}
\usepackage{textcomp}
\usepackage{bm}
\usepackage{multirow}
\usepackage{amsmath}
\usepackage{amsthm}
\usepackage{amssymb}
\usepackage{graphicx}
\usepackage{setspace}
\usepackage[numbers]{natbib}
\usepackage{microtype}
\onehalfspacing
\usepackage[unicode=true,pdfusetitle,
 bookmarks=true,bookmarksnumbered=false,bookmarksopen=false,
 breaklinks=false,pdfborder={0 0 0},pdfborderstyle={},backref=false,colorlinks=true]
 {hyperref}
\hypersetup{
 urlcolor=blue, citecolor=blue, linkcolor=red}

\makeatletter

%%%%%%%%%%%%%%%%%%%%%%%%%%%%%% LyX specific LaTeX commands.
%% Because html converters don't know tabularnewline
\providecommand{\tabularnewline}{\\}

%%%%%%%%%%%%%%%%%%%%%%%%%%%%%% Textclass specific LaTeX commands.
\theoremstyle{plain}
\newtheorem{lem}{\protect\lemmaname}
\theoremstyle{plain}
\newtheorem{thm}{\protect\theoremname}
\theoremstyle{plain}
\newtheorem{prop}{\protect\propositionname}
\theoremstyle{remark}
\newtheorem*{rem*}{\protect\remarkname}
\theoremstyle{remark}
\newtheorem*{acknowledgement*}{\protect\acknowledgementname}

%%%%%%%%%%%%%%%%%%%%%%%%%%%%%% User specified LaTeX commands.
%\usepackage{floatrow}
%\usepackage{caption}
%\definecolor{green}{RGB}{100, 0, 120}
%\@ifundefined{definecolor}
%{\usepackage{color}}{}
%\usepackage{epsfig}%\usepackage{empheq}%\usepackage{bbold}
%\usepackage{showlabels}

%\usepackage{floatrow}
%\usepackage{caption}

\setlength{\topmargin}{-.3in}\setlength{\oddsidemargin}{.0in}%\setlength{\evensidemargin}{-.21in}
\setlength{\textheight}{8.5in}\setlength{\textwidth}{6.35in}\setlength{\footnotesep}{\baselinestretch\baselineskip}\newlength{\abstractwidth}\setlength{\abstractwidth}{\textwidth}\addtolength{\abstractwidth}{-6pc}
\g@addto@macro\normalsize{\setlength{\abovedisplayskip}{10pt}\setlength{\belowdisplayskip}{20pt}\setlength{\abovedisplayshortskip}{10pt}\setlength{\belowdisplayshortskip}{20pt}}
%\definecolor{green}{RGB}{100, 0, 120}

%\makeatother

\makeatother

\providecommand{\acknowledgementname}{Acknowledgement}
\providecommand{\lemmaname}{Lemma}
\providecommand{\propositionname}{Proposition}
\providecommand{\remarkname}{Remark}
\providecommand{\theoremname}{Theorem}

\begin{document}

\title{\textbf{\Large{}Conditions for local transformations between sets
of quantum states }}

\author{Somshubhro Bandyopadhyay\thanks{Department of Physics, Bose Institute, EN 80, Sector V, Bidhannagar,
Kolkata 700091, India; \textcolor{blue}{som.s.bandyopadhyay@gmail.com},
\textcolor{blue}{som@jcbose.ac.in}} $\hspace{1em}$ Saronath Halder\thanks{Centre for Quantum Optical Technologies, Centre of New Technologies,
University of Warsaw, Banacha 2c, 02-097 Warsaw, Poland; \textcolor{blue}{saronath.halder@gmail.com}} $\hspace{1em}$ Ritabrata Sengupta\thanks{Department of Mathematical Sciences, Indian Institute of Science,
Education, \& Research (IISER) Berhampur, Transit campus, Govt. ITI,
NH 59, Berhampur 760 010, Ganjam, Odisha, India; \textcolor{blue}{rb@iiserbpr.ac.in}}}
\maketitle
\begin{abstract}
We study the problem of transforming a set of pure bipartite states
into another using deterministic LOCC (local operations and classical
communication). Necessary conditions for the existence of such a transformation
are obtained using LOCC constraints on state transformation, entanglement,
and distinguishability. These conditions are shown to be independent
but not sufficient. We discuss their satisfiability and classify all
possible input-output pairs of sets accordingly. We also prove that
strict inclusions hold between LOCC, separable, and positive partial
transpose operations for set transformation problems. 
\end{abstract}

\section{Introduction}

Many of the distinctive features of quantum theory appear in scenarios
that impose constraints on quantum operations. One such scenario is
the so-called \emph{distant lab} paradigm in which quantum operations
on a composite system are restricted to LOCC (local operations and
classical communication), a subset of all quantum operations that
one may perform on the system under consideration. The LOCC setup
considers a quantum system shared between two or more physically separated
observers who can perform any quantum operation on their local subsystems
and communicate via classical channels but cannot exchange quantum
information. Quantum operations that can be realized in this fashion
belong to the LOCC class \citep{Chitambar+LOCC}. Many of the fundamental
questions in quantum information theory, especially those related
to quantum nonlocality \citep{Bell-nonlocality-review,NLWE,Horodeckis+MNLE,Bandyo-2011,Halder+-2019},
resource theories \citep{resource-theories-review}, state discrimination
\citep{NLWE,ben99u,Horodeckis+MNLE,Bandyo-2011,Halder+-2019,Peres-Wootters,Walgate+2000,Bell-indistinguishable,Walgate-Hardy-2002,Watrous-2005,Nathanson-2005,BGK-2011,Yu-Duan-2012,B-IQC-2015,BN-2013,Cosentino-2013,Cosentino-Russo-2014},
and entanglement distillation \citep{entanglement-review}, are studied
within the framework of LOCC. 

One of the goals of quantum information theory is to understand the
strengths and limitations of LOCC. The motivation mainly comes from
the resource theory of entanglement \citep{resource-theories-review,entanglement-review},
which considers quantum operations belonging to the LOCC class as
\emph{free} but those requiring entangled states, as a resource, expensive.
Thus it is necessary to identify which quantum operations can, in
fact, be realized by LOCC and which cannot be. The standard approach
to do so, in general, is a two-step process: first, we specify a quantum
task\textendash a measurement, a unitary operation, or something more
general, such as entanglement transformation or state discrimination,
and then try to work out, often with the help of known LOCC constraints,
whether LOCC can perform this task just as well as global operations,
and if LOCC is sub-optimal, only then do we need to call upon entanglement.

Some of the well-known limitations of LOCC manifest themselves in
quantum state transformations. For example, LOCC cannot convert a
separable state into an entangled state, increase entanglement on
average, or increase the Schmidt rank of a bipartite state. These
properties may be applied to identify quantum operations not realizable
by LOCC. For example, consider the result stating that LOCC implementation
of Bell measurement is impossible \citep{Bell-indistinguishable}\textendash{}
here, the proof simply follows from the demonstration that LOCC implementation
of Bell measurement implies LOCC conversion of a product state into
a Bell state. Likewise, applying the second property, one can show
that LOCC cannot convert a nonmaximally entangled state into a maximally
entangled state belonging to the same state space with unit probability.
Other notable results include \emph{quantum nonlocality without entanglement}
and the proof that LOCC is a strict subset of separable quantum operations
\citep{NLWE}, the existence of incomparable pairs \citep{Nielsen-99},
and the grouping of entangled states into SLOCC (stochastic LOCC)
inequivalence classes \citep{entanglement-review}.

One of the fundamental problems in quantum information theory is quantum
set transformation, which, simply put, deals with the task of transforming
a set of input states into a set of output states. The significance
of this problem or variants thereof lies in the fact that many quantum
information processing tasks can be naturally understood as set transformations,
especially in situations where a quantum channel acts on an ensemble
of input or signal states. The general set transformation problem
can be described as follows. Suppose that $S_{\rho}=\left\{ \rho_{1},\dots,\rho_{n}\right\} $
is a given set of input states and $S_{\sigma}=\left\{ \sigma_{1},\dots,\sigma_{n}\right\} $
is a set of output states, and the goal is to achieve the transformation
$\rho_{i}\rightarrow\sigma_{i}$ for every $i$. This, however, is
possible only when there exists a quantum operation $\bm{\Lambda}$
such that $\bm{\Lambda}\left(\rho_{i}\right)=\sigma_{i}$ for all
$i.$ So the question is: what necessary and sufficient conditions
must be satisfied for the existence of a $\Lambda$ such that the
desired transformation is possible? Note that the scenario we just
described does not have restrictions on quantum operations and has
been studied before but with limited success \citep{Alberti-Uhlman-1980-1,AlbertiUhlman-1980-2,Uhlmann-1985,Chefles-2000,Chefles+}. 

In this paper, we investigate the set transformation problem within
the paradigm of LOCC. The motivation is three-fold. First, LOCC provides
the framework to study set transformations in a distributed setting
which, as discussed earlier, is the natural setup in many quantum
information-theoretic tasks. But this scenario, as far as we know,
has not been systematically studied before (a somewhat closely related
problem was discussed recently in Ref.$\,$\citep{BH-2021}). Second
is the observation that the LOCC problem has certain distinguishing
features that do not occur in the general formulation simply because
they arise from orthogonality of states, quantum entanglement, and
local distinguishability, none of which is significant when there
are no restrictions on quantum operations. Our third motivation stems
from the earlier discussion on the need to understand the strengths
and limitations of LOCC protocols, which largely depends on investigating
how well LOCC protocols perform well-defined quantum tasks, which,
in this case, is quantum set transformation. 

Specifically, we investigate conditions under which a given set of
pure bipartite states can be deterministically converted into another
using LOCC. We obtain necessary conditions by applying LOCC constraints
on state transformations, entanglement, and distinguishability. We
show that they are independent but not sufficient. To demonstrate
independence, we consider a proper subset of the necessary conditions
and then show that for every such subset, there exists a pair of input
and output sets that satisfy the members of the subset under consideration
but violate others that do not belong to it. This leads to a natural
classification of all possible input-output set pairs with distinct
and sometimes overlapping properties. To prove the insufficiency,
we present an input-output set pair that satisfies all the necessary
conditions but for which the desired set transformation is still impossible. 

We also discuss the set transformation problem vis-à-vis other notable
classes of quantum operations, namely, the separable operations (SEP)
and positive partial-transpose operations (PPT). It is well-known
that the following relations hold (see, for example, \citep{Chitambar+LOCC}):
\[
\text{LOCC}\subset\text{SEP}\subset\text{PPT}\subset\text{ALL},
\]
where ALL denotes the set of all possible quantum operations. The
above inclusions are strict. However, for specific problems, they
may not be; for example, the classes are all equally good for distinguishing
two mutually orthogonal pure states because any two orthogonal pure
states can be perfectly distinguished by LOCC \citep{Walgate+2000}.
So the question here is, are the inclusions strict for set transformations?
We will answer this question in the affirmative. 

\section{Transformations between sets of quantum states }

Let $S_{\rho}=\left\{ \rho_{1},\dots,\rho_{n}\right\} $ and $S_{\sigma}=\left\{ \sigma_{1},\dots,\sigma_{n}\right\} $
be two sets of quantum states. Suppose that a quantum system is now
prepared in one of the states chosen from $S_{\rho}$ but we do not
know which state it is. The goal is to find out whether there exists
a \emph{deterministic} quantum transformation\footnote{One may consider a probabilistic formulation in which $\rho_{i}\rightarrow\sigma_{i}$
with some nonzero probability $p_{i}$ for every $i$ \citep{Chefles+}.}, a linear completely positive trace-preserving map $\bm{\Lambda}$
satisfying $\bm{\Lambda}\left(\rho_{i}\right)=\sigma_{i}$ for all
$i$. Henceforth, we will call $S_{\rho}$ the input set, $S_{\sigma}$
the output set, and use the notation $S_{\rho}\rightarrow S_{\sigma}$
to indicate $\bm{\Lambda}\left(\rho_{i}\right)=\sigma_{i}$ for all
$i$. 

Note that if the input states are mutually orthogonal, the problem
is trivial, for one could first learn the identity of the input state
with appropriate measurement and then prepare the output state accordingly.
So the input states, at the very least, cannot be all mutually orthogonal.
The problem, however, mostly remains open, and the answers are known
only for specific cases \citep{Uhlmann-1985,Chefles+}. 

The fidelity $\bm{f}\left(\rho,\sigma\right)$ between two quantum
states $\rho$ and $\sigma$ is defined as
\begin{flalign}
\bm{f}\left(\rho,\sigma\right) & =\text{Tr}\sqrt{\sqrt{\sigma}\rho\sqrt{\sigma}}.\label{fidelity}
\end{flalign}
If $\rho=\left|\psi\right\rangle \left\langle \psi\right|$ and $\sigma=\left|\phi\right\rangle \left\langle \phi\right|$
correspond to pure states, then 
\begin{flalign}
\bm{f}\left(\psi,\phi\right) & =\left|\left\langle \psi\vert\phi\right\rangle \right|.\label{fidelity-pure-states}
\end{flalign}
The following results have been proved.
\begin{lem}
\citep{Uhlmann-1985} \label{Lemma1} If $S_{\rho}\rightarrow S_{\sigma}$
then $\bm{f}\left(\sigma_{i},\sigma_{j}\right)\geq\bm{f}\left(\rho_{i},\rho_{j}\right)$
for all $1\leq i,j\leq n$. 
\end{lem}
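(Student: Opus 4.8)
The plan is to derive the claim from the monotonicity of fidelity under completely positive trace-preserving maps, which itself follows from Uhlmann's theorem on purifications. First I would recall Uhlmann's theorem: for any two states $\rho,\sigma$ on a Hilbert space $\mathcal{H}$,
\[
\bm{F}(\rho,\sigma)=\max_{|\psi\rangle,|\phi\rangle}\left|\left\langle \psi\vert\phi\right\rangle \right|,
\]
where the maximum runs over all purifications $|\psi\rangle$ of $\rho$ and $|\phi\rangle$ of $\sigma$ in $\mathcal{H}\otimes\mathcal{H}_{R}$, with the reference system $\mathcal{H}_{R}$ of dimension at least $\dim\mathcal{H}$. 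Fixing $i,j$, I would choose purifications $|\psi_{i}\rangle$, $|\psi_{j}\rangle$ of $\rho_{i}$, $\rho_{j}$ that attain $\bm{F}(\rho_{i},\rho_{j})=\left|\left\langle \psi_{i}\vert\psi_{j}\right\rangle \right|$.

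Next I would bring in a Stinespring dilation of $\bm{\Lambda}$: there is an isometry $V:\mathcal{H}\to\mathcal{H}'\otimes\mathcal{H}_{E}$ with $\bm{\Lambda}(X)=\mathrm{Tr}_{E}\!\left(VXV^{\dagger}\right)$ for all $X$. The vector $(V\otimes I_{R})|\psi_{i}\rangle$ lives in $\mathcal{H}'\otimes\mathcal{H}_{E}\otimes\mathcal{H}_{R}$, and tracing out $\mathcal{H}_{E}\otimes\mathcal{H}_{R}$ returns $\bm{\Lambda}(\rho_{i})=\sigma_{i}$; hence it is a purification of $\sigma_{i}$, and likewise $(V\otimes I_{R})|\psi_{j}\rangle$ purifies $\sigma_{j}$. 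Since Uhlmann's theorem expresses $\bm{F}(\sigma_{i},\sigma_{j})$ as a maximum over all such purifications, this particular pair gives the lower bound
\[
\bm{F}(\sigma_{i},\sigma_{j})\ \geq\ \left|\left\langle \psi_{i}\right|(V^{\dagger}V\otimes I_{R})\left|\psi_{j}\right\rangle \right|=\left|\left\langle \psi_{i}\vert\psi_{j}\right\rangle \right|=\bm{F}(\rho_{i},\rho_{j}),
\]
where I used $V^{\dagger}V=I_{\mathcal{H}}$. As $i,j$ were arbitrary with $1\leq i,j\leq n$, this is exactly the assertion.

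The one point requiring care — and the main obstacle, such as it is — is the bookkeeping of the auxiliary spaces: one must ensure that all purifications of $\sigma_{i}$ and $\sigma_{j}$ entering Uhlmann's maximum may be taken in a common reference space large enough to contain the particular purifications $(V\otimes I_{R})|\psi_{i}\rangle$ and $(V\otimes I_{R})|\psi_{j}\rangle$ that we construct, so that the maximum legitimately dominates their overlap. The remaining verifications (that the constructed vectors are genuine purifications, and that $V$ being an isometry kills the dilation) are immediate. An essentially equivalent alternative is simply to quote the data-processing inequality for fidelity; I would nonetheless prefer to record the short purification argument above so the paper stays self-contained.
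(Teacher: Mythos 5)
Your proof is correct, but note that the paper does not actually prove Lemma \ref{Lemma1} at all: it is quoted verbatim from the literature (the citation to Uhlmann), so there is no internal argument to compare yours against. What you have written is the standard proof of the data-processing inequality for fidelity, specialized to the map $\bm{\Lambda}$ realizing $S_{\rho}\rightarrow S_{\sigma}$: take Uhlmann-optimal purifications of $\rho_{i},\rho_{j}$, push them through a Stinespring isometry $V$ of $\bm{\Lambda}$, observe that the images purify $\sigma_{i},\sigma_{j}$, and use $V^{\dagger}V=I$ to see the overlap is preserved, so the Uhlmann maximum for $(\sigma_{i},\sigma_{j})$ dominates $\bm{F}(\rho_{i},\rho_{j})$. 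All steps check out. The one caveat you flag --- whether the constructed purifications, which live over the enlarged reference $\mathcal{H}_{E}\otimes\mathcal{H}_{R}$, are admissible in Uhlmann's maximum --- is in fact not an obstacle: the direction of Uhlmann's theorem you need is only that the overlap of \emph{any} two purifications of $\sigma_{i}$ and $\sigma_{j}$ over a common reference system (of arbitrary dimension) is bounded above by $\bm{F}(\sigma_{i},\sigma_{j})$, and this holds without any dimension matching, since a purification over a larger reference is carried to one over a minimal reference by an isometry on the reference alone, which does not change the overlap bound. So your argument is complete as stated, and is the natural self-contained justification of the quoted lemma; quoting the data-processing inequality directly, as you note, would be an equally legitimate shortcut.
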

Lemma \ref{Lemma1} is a necessary condition that tells us that the
pairwise distinguishability can never increase under a deterministic
quantum operation. This is, of course, intuitively expected. The condition
is also sufficient in the following case \citep{AlbertiUhlman-1980-2}. 
\begin{lem}
\label{Lemma 2} Let $S_{\rho}=\left\{ \rho_{1},\rho_{2}\right\} $
and $S_{\sigma}=\left\{ \sigma_{1},\sigma_{2}\right\} $, where $\rho_{1}$
and $\rho_{2}$ are pure states. Then $S_{\rho}\rightarrow S_{\sigma}$
if and only if $\bm{f}\left(\sigma_{1},\sigma_{2}\right)\geq\bm{f}\left(\rho_{1},\rho_{2}\right)$. 
\end{lem}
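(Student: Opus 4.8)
The forward direction (``only if'') needs no work: it is the special case $n=2$ of Lemma~\ref{Lemma1}, since any $\bm{\Lambda}$ with $\bm{\Lambda}(\rho_i)=\sigma_i$ forces $\bm{F}(\sigma_1,\sigma_2)\ge\bm{F}(\rho_1,\rho_2)$. So the plan concentrates on the converse, and the approach I would take is to build $\bm{\Lambda}$ by hand as an isometry into an enlarged space followed by a partial trace. Write $\rho_i=|\psi_i\rangle\langle\psi_i|$, set $c=\bm{F}(\rho_1,\rho_2)=|\langle\psi_1|\psi_2\rangle|$ and $d=\bm{F}(\sigma_1,\sigma_2)$, so the hypothesis reads $0\le c\le d\le 1$. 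The case $d=0$ is degenerate: then $c=0$, the inputs are orthogonal pure states, and one simply measures to identify the input and prepares the corresponding output; so I assume $d>0$, and after adjusting the phase of $|\psi_2\rangle$ (which leaves $\rho_2$ unchanged) I take $\langle\psi_1|\psi_2\rangle=c\ge 0$.

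The key tool on the output side is Uhlmann's theorem. I would pick a purification $|\Sigma_1\rangle$ of $\sigma_1$ and a purification $|\Sigma_2\rangle$ of $\sigma_2$, on a common space $\mathcal{H}\otimes\mathcal{H}_E$, that attain the fidelity, and rephase $|\Sigma_2\rangle$ so that $\langle\Sigma_1|\Sigma_2\rangle=d$. In general $d>c$, so these purifications cannot be matched directly to $|\psi_1\rangle,|\psi_2\rangle$ by an isometry --- this mismatch is the one point that requires an idea. The fix is to \emph{dilute} the overlap: adjoin a qubit $A$ with unit vectors $|a_1\rangle,|a_2\rangle$ satisfying $\langle a_1|a_2\rangle=c/d\in[0,1]$, and put $|\Sigma_i'\rangle=|\Sigma_i\rangle\otimes|a_i\rangle$. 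Then $\langle\Sigma_1'|\Sigma_2'\rangle=d\cdot(c/d)=c$, while $\mathrm{Tr}_{EA}|\Sigma_i'\rangle\langle\Sigma_i'|=\sigma_i$ is unaffected since $|a_i\rangle$ is normalized.

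Now the pairs $(|\psi_1\rangle,|\psi_2\rangle)$ and $(|\Sigma_1'\rangle,|\Sigma_2'\rangle)$ have the same Gram matrix $\left(\begin{smallmatrix}1&c\\c&1\end{smallmatrix}\right)$, so the assignment $|\psi_i\rangle\mapsto|\Sigma_i'\rangle$ is inner-product preserving on $\mathrm{span}\{|\psi_1\rangle,|\psi_2\rangle\}$ and extends --- by sending an orthonormal basis of the orthogonal complement to fresh orthonormal vectors in a large enough $\mathcal{H}\otimes\mathcal{H}_E\otimes\mathcal{H}_A$ --- to an isometry $V$ on the whole input space. Defining $\bm{\Lambda}(X)=\mathrm{Tr}_{EA}(VXV^{\dagger})$ gives a completely positive trace-preserving map with $\bm{\Lambda}(\rho_i)=\mathrm{Tr}_{EA}|\Sigma_i'\rangle\langle\Sigma_i'|=\sigma_i$, which is exactly $S_{\rho}\rightarrow S_{\sigma}$.

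The main obstacle, such as it is, is recognizing the overlap-dilution trick that bridges the gap between the achievable output overlap $d$ and the required value $c$; everything else is routine linear algebra once Uhlmann's theorem is invoked. The only case needing a separate glance is $c=1$, i.e.\ $\rho_1=\rho_2$: then $d=1$ forces $\sigma_1=\sigma_2$, the common Gram matrix degenerates, and one either checks that the isometry construction still makes sense or simply takes the constant channel $X\mapsto(\mathrm{Tr}\,X)\,\sigma_1$.
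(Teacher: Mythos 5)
Your argument is correct. Note that the paper itself offers no proof of this lemma --- it is quoted from the literature (the Alberti--Uhlmann reference, see also Chefles, Jozsa, and Winter) --- and your construction, namely taking Uhlmann-optimal purifications of $\sigma_{1},\sigma_{2}$, diluting their overlap with ancilla states so that the Gram matrix matches that of $\left|\psi_{1}\right\rangle ,\left|\psi_{2}\right\rangle $, extending to an isometry and tracing out, is essentially the standard proof given in those sources, with the degenerate cases $d=0$ and $c=1$ handled appropriately.
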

Lemma \ref{Lemma 2} can be improved upon for qubit states. Let $\left\Vert \mathcal{O}\right\Vert _{1}=\text{Tr}\sqrt{\mathcal{O}^{\dagger}\mathcal{O}}$
be the trace-norm of an operator $\mathcal{O}$. If $\mathcal{O}$
is Hermitian, then $\left\Vert \mathcal{O}\right\Vert _{1}=\sum_{i}\left|\lambda_{i}\right|$,
where $\lambda_{i}$ are the eigenvalues of $\mathcal{O}$.
\begin{lem}
\citep{AlbertiUhlman-1980-2} \label{Lemma3} Let $S_{\rho}=\left\{ \rho_{1},\rho_{2}\right\} $
and $S_{\sigma}=\left\{ \sigma_{1},\sigma_{2}\right\} $, where $\rho_{1},\rho_{2},\sigma_{1},\sigma_{2}$
are qubit states. Then $S_{\rho}\rightarrow S_{\sigma}$ if and only
if 
\begin{flalign}
\left\Vert p_{1}\sigma_{1}-p_{2}\sigma_{2}\right\Vert _{1} & \leq\left\Vert p_{1}\rho_{1}-p_{2}\rho_{2}\right\Vert _{1}\label{Lemma3-equation}
\end{flalign}
for all probability distributions $\left\{ p_{1},p_{2}\right\} $. 
\end{lem}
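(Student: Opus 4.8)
The plan is to prove the two implications separately. The forward (``only if'') direction is immediate and holds in any dimension: every CPTP map $\bm{\Lambda}$ contracts the trace norm of Hermitian operators, $\left\Vert \bm{\Lambda}(X)\right\Vert _{1}\le\left\Vert X\right\Vert _{1}$, so applying this to $X=p_{1}\rho_{1}-p_{2}\rho_{2}$ and invoking linearity together with $\bm{\Lambda}(\rho_{i})=\sigma_{i}$ gives $\left\Vert p_{1}\sigma_{1}-p_{2}\sigma_{2}\right\Vert _{1}=\left\Vert \bm{\Lambda}(p_{1}\rho_{1}-p_{2}\rho_{2})\right\Vert _{1}\le\left\Vert p_{1}\rho_{1}-p_{2}\rho_{2}\right\Vert _{1}$ for every distribution $\left\{ p_{1},p_{2}\right\} $. (This is the natural strengthening of Lemma \ref{Lemma1}, since $\tfrac{1}{2}(1+\left\Vert p_{1}\rho-p_{2}\sigma\right\Vert _{1})$ is the Helstrom bound for discriminating $\rho$ and $\sigma$ with priors $p_{1},p_{2}$.) The substance is the converse, which is the Alberti--Uhlmann theorem; I would argue it in the Bloch picture.

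Write $\rho_{i}=\tfrac{1}{2}(I+\vec r_{i}\cdot\vec\sigma)$ and $\sigma_{i}=\tfrac{1}{2}(I+\vec s_{i}\cdot\vec\sigma)$, so that a qubit channel $\bm{\Lambda}$ acts on the Bloch ball $B=\left\{ \vec r:|\vec r|\le1\right\} $ by an affine map $\vec r\mapsto M\vec r+\vec c$ (with $M$ a $3\times3$ real matrix and $\vec c$ the image of the center), this form being automatically trace preserving. Because $\bm{\Lambda}$ is linear, the requirement $\bm{\Lambda}(\rho_{i})=\sigma_{i}$ is equivalent to the two affine--interpolation equations $M\vec r_{1}+\vec c=\vec s_{1}$ and $M\vec r_{2}+\vec c=\vec s_{2}$, which then force $M(\vec r_{1}-t\vec r_{2})+(1-t)\vec c=\vec s_{1}-t\vec s_{2}$ for all $t$. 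Next, parametrizing $p_{1}=1/(1+t)$, $p_{2}=t/(1+t)$ with $t\ge0$ and computing the eigenvalues of $\rho_{1}-t\rho_{2}=\tfrac{1}{2}[(1-t)I+(\vec r_{1}-t\vec r_{2})\cdot\vec\sigma]$ gives $\left\Vert p_{1}\rho_{1}-p_{2}\rho_{2}\right\Vert _{1}=\tfrac{1}{1+t}\max\bigl\{|1-t|,\,|\vec r_{1}-t\vec r_{2}|\bigr\}$, and likewise for $\sigma$. Hence the hypothesis is equivalent to the purely geometric condition
\[
|\vec s_{1}-t\vec s_{2}|\;\le\;\max\bigl\{\,|1-t|,\;|\vec r_{1}-t\vec r_{2}|\,\bigr\}\qquad\text{for all }t\ge0.
\]
Moreover, conjugating by a unitary before and after $\bm{\Lambda}$ is a channel operation that preserves every trace norm, so I may first rotate the Bloch vectors into a standard position (e.g.\ $\vec r_{1},\vec r_{2}$ and $\vec s_{1},\vec s_{2}$ each in one coordinate plane) and, by continuity of both sides, assume the states are invertible.

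It then remains to exhibit, among the affine family of maps solving $M\vec r_{i}+\vec c=\vec s_{i}$ (the freedom being to add any linear map annihilating $\vec r_{1}-\vec r_{2}$), one that is the Bloch map of a genuine qubit channel. The idea is to build such a map as a convex combination of elementary channels---unitary conjugations together with measure-and-prepare (entanglement-breaking) maps---with mixing weights read off from the displayed inequality; one checks that the inequality evaluated at $t=0$, at $t\to\infty$, and at the ``corner'' where $|1-t|=|\vec r_{1}-t\vec r_{2}|$ supplies exactly the slack the construction consumes, while at $t=1$ it reproduces the constraint of Lemma \ref{Lemma1}. The main obstacle---and the point at which dimension two is used essentially---is that \emph{complete} positivity is strictly stronger than sending $B$ into $B$: the candidate map must be checked against Choi-matrix positivity (equivalently the Ruskai--Szarek--Werner / Fujiwara--Algoet inequalities relating the singular values of $M$ to $\vec c$), and one must split into cases according to which branch of the maximum is active (i.e.\ whether the optimal discrimination of $\rho_{1},\rho_{2}$ is ever ``trivial'', which corresponds to $|\vec r_{1}-t\vec r_{2}|\le|1-t|$ on the relevant range). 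Carrying out this verification is the delicate part; everything else is bookkeeping. I would also remark, anticipating the later sections, that this analysis is exactly what fails beyond qubits: in higher dimension the family of binary trace-norm inequalities no longer constrains enough of the state geometry to force the existence of a completely positive interpolant.
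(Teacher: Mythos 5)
The paper offers no proof of this lemma at all---it is imported verbatim from Alberti and Uhlmann \citep{AlbertiUhlman-1980-2}---so there is no internal argument to compare against; your attempt must stand on its own, and it does not. The ``only if'' direction is fine and complete: CPTP maps contract the trace norm of Hermitian operators, and linearity gives the inequality for every $\left\{ p_{1},p_{2}\right\} $. The reduction of the hypothesis to the Bloch-geometric condition $|\vec s_{1}-t\vec s_{2}|\le\max\left\{ |1-t|,|\vec r_{1}-t\vec r_{2}|\right\} $ for all $t\ge0$ is also correct (the eigenvalue computation and the identity $\tfrac{1}{2}\left(|a+b|+|a-b|\right)=\max\left\{ |a|,|b|\right\} $ check out). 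But the converse---which is the entire content of the Alberti--Uhlmann theorem---is not proved. At the decisive point you write that one should build the channel as a convex combination of unitary and measure-and-prepare maps ``with mixing weights read off from the displayed inequality'' and that ``one checks'' Choi positivity via the Fujiwara--Algoet/Ruskai--Szarek--Werner conditions, calling this ``the delicate part.'' That delicate part is exactly where the theorem lives: you never specify the candidate $M$ and $\vec c$, never give the mixing weights, never perform the case split on which branch of the maximum is active, and never verify complete positivity. Note also that the asserted dichotomy ``sending the Bloch ball into itself versus complete positivity'' is the very obstruction your sketch must overcome, and nothing in the inequality at $t=0$, $t\to\infty$, $t=1$ and the corner point is shown to supply the needed Choi-matrix slack; that claim is asserted, not demonstrated.

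Two smaller points. First, the reduction ``by continuity assume the states are invertible'' needs an actual compactness argument (a sequence of channels solving perturbed problems has a subsequential limit which is again a channel and interpolates the original data); this is routine but should be said. Second, be aware that modern proofs of this converse (e.g.\ via semidefinite-programming duality for the channel-interpolation feasibility problem, or via Alberti--Uhlmann's original operator-inequality formulation $p_{1}\rho_{1}-t p_{2}\rho_{2}$ versus $p_{1}\sigma_{1}-t p_{2}\sigma_{2}$) exist and are arguably cleaner than a hands-on Bloch-ball construction; if you intend to complete the argument along your lines, the Choi-positivity verification with explicit $M,\vec c$ in the rotated frame is unavoidable and is several pages of genuine case analysis, not bookkeeping.
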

Necessary conditions have also been obtained for sets of pure states,
which are also sufficient provided the input states are linearly independent
\citep{Chefles-2000}. The good news, however, pretty much ends here.
Lemma \ref{Lemma 2}, for example, fails to be sufficient if $\rho_{1}$
and $\rho_{2}$ are not pure states \citep{Chefles+}. One can also
construct explicit counterexamples to show that Lemma \ref{Lemma1}
is not sufficient either \citep{Jozsa+1999}. The necessary and sufficient
conditions for transforming a set of pure states to a set of pure
or mixed states have also been obtained \citep{Chefles+}, but they
may not easy to deal with in general. 

\section{LOCC transformations between sets of quantum states }

The LOCC set transformation problem can be formulated by making the
expected changes to the general formulation: the quantum system is
now composite, that is, composed of two or more subsystems, and the
quantum operations now belong to the LOCC class. 

In this paper, we will only consider bipartite quantum systems $\mathcal{H}=\mathcal{H}_{A}\otimes\mathcal{H}_{B}$
of finite dimensions. Let $\mathcal{D}\left(\mathcal{H}\right)$ denote
the set of density operators on $\mathcal{H}$, and let $S_{\varrho}=\left\{ \varrho_{1},\dots,\varrho_{n}\right\} \subset\mathcal{D}\left(\mathcal{H}\right)$
be the set of input states and $S_{\varsigma}=\left\{ \varsigma_{1},\dots,\varsigma_{n}\right\} \subset\mathcal{D}\left(\mathcal{H}\right)$
be the set of output states, where $n\geq2$. So the question here
is the following: Does there exist a \emph{deterministic} LOCC transformation
$\mathcal{\bm{L}}$ such that $\mathcal{\bm{L}}\left(\varrho_{i}\right)=\varsigma_{i}$
for all $i$?\footnote{Probabilistic formulation is also possible similar to entanglement
transformations \citep{Vidal-1999}. } 

The operational meaning here is also quite clear. We assume that two
physically separated observers, Alice and Bob, share a quantum state
chosen from $S_{\varrho}$ but do not know which state they actually
share, but their goal is to produce the correct output state with
certainty using a LOCC protocol. Then for which input-output pairs
of sets they could perform this task? 

Before we get to the problem in detail, let us briefly discuss the
essential differences between the LOCC and the general problem. We
will see that by restricting the class of quantum operations to LOCC,
we encounter situations that do not arise otherwise. 

\subsection*{Orthogonality of states}

Recall that, in the general case, the input states cannot be all pairwise
orthogonal because the problem becomes trivial due to the \textquotedblleft identify
and prepare (IP)\textquotedblright{} strategy. In the LOCC scenario,
as we will now explain, the input states can be mutually orthogonal
and yet give rise to nontrivial situations as the IP strategy does
not always work. 

Let us recall some basic concepts and results from LOCC distinguishability
of quantum states which we will also need later in this paper. One
of the fundamental results is that a set of bipartite or multipartite
orthogonal states cannot always be perfectly distinguished (i.e.,
without error) by LOCC \citep{NLWE,Walgate+2000}, although they can
be, as we know, by a joint measurement on the whole system. Specifically,
if a given set of states is perfectly distinguishable by LOCC, then
it means that the unknown state can be identified without error using
some LOCC protocol. Now, this can happen either with certainty or
with some nonzero probability. We will call a set of orthogonal states
locally distinguishable if the members can be perfectly distinguished
by LOCC with certainty, else locally indistinguishable. Note that
a locally indistinguishable set comes in two varieties. The first
is where a LOCC protocol cannot distinguish the states without error.
The second one is where a LOCC protocol can, in fact, distinguish
the states without error but not with unit probability. This can sometimes
be understood as unambiguous state discrimination where one of the
outcomes is inconclusive, but all others are conclusive \citep{Cosentino-2013}.
There are, however, examples that do not quite fit into this picture;
for instance, in LOCC discrimination of three Bell states, none of
the measurement outcomes is conclusive except one. 

We now explain why the IP strategy does not always work in the LOCC
scenario. Consider a LOCC set transformation problem where the input
states are mutually orthogonal. Therefore, they are either locally
indistinguishable or distinguishable. If they are locally indistinguishable,
the IP strategy clearly fails, and if they are not, the IP strategy
still may not succeed. To see this, suppose the input states are locally
distinguishable, and the output states are entangled. Since the input
states can be locally distinguished, we know there exists a LOCC protocol
that correctly identifies the input state. So in the first step, we
will be able to correctly identify the state, but in the process,
the initial state, in general, will transform into an intermediate
state, the entanglement of which will be crucial to carrying out the
next step. That is because the desired output state ought to be prepared
by deterministic LOCC from this intermediate state and a necessary
condition for this to happen is that the entanglement of the intermediate
state must be greater than or equal to that of the output state. So
if this condition is not satisfied, which may well be the case, we
can never achieve our goal. Therefore, while orthogonal input states
make the general problem trivial, the same cannot be said in the LOCC
scenario. 

\subsection*{Entanglement}

The entanglement of the states under consideration is another crucial
element that differentiates the LOCC scenario from the general problem.
Since the input and output states could be entangled, the LOCC restrictions
on entanglement transformations come into play. For example, we know
that a product state cannot be converted into an entangled state by
LOCC with nonzero probability. So if an input state is a product and
the corresponding output state is entangled, the desired LOCC conversion
at the set level obviously cannot happen. 

One of the results we will extensively use in our analysis is Nielsen's
theorem \citep{Nielsen-99} which provides a necessary and sufficient
condition for converting a bipartite pure state to another using deterministic
LOCC. 

\emph{Schmidt decomposition.} Every bipartite pure state $\mathcal{\left|\psi\right\rangle \in H}_{A}\otimes\mathcal{H}_{B}$
can be written as a linear superposition of biorthogonal product states
known as the Schmidt decomposition \citep{entanglement-review}: 
\begin{flalign}
\left|\psi\right\rangle  & =\sum_{i=1}^{r}\lambda_{i}\left|a_{i}\right\rangle \left|b_{i}\right\rangle ,\hspace{1em}\lambda_{i}\geq\lambda_{i+1}\geq0,\;\sum_{i=1}^{r}\lambda_{i}^{2}=1\label{Schmidt decomposition}
\end{flalign}
where $r\leq\min\left\{ \dim\mathcal{H}_{A},\dim\mathcal{H}_{B}\right\} $
is the Schmidt number, $\left\{ \lambda_{i}\right\} $ are the Schmidt
coefficients, and $\left\{ \left|a_{i}\right\rangle \right\} $ and
$\left\{ \left|b_{i}\right\rangle \right\} $ are orthonormal bases
of $\mathcal{H}_{A}$ and $\mathcal{H}_{B}$ respectively. 
\begin{thm}
\label{(Nielsen)} (Nielsen) Let $\left|\chi\right\rangle $ and $\left|\eta\right\rangle $
be two bipartite states with Schmidt coefficients given by $\left\{ \alpha_{i}\right\} _{i=1}^{n}$
and $\left\{ \beta_{i}\right\} _{i=1}^{m}$, where $\alpha_{i}\geq\alpha_{i+1}$
for $i=1,\dots,n-1$, and $\beta_{j}\geq\beta_{j+1}$ for $j=1,\dots,m-1$.
Then there exists a deterministic LOCC transformation $\left|\chi\right\rangle \rightarrow\left|\eta\right\rangle $
if and only if 
\begin{alignat}{1}
\sum_{i=1}^{l}\alpha_{i}^{2} & \leq\sum_{i=1}^{l}\beta_{i}^{2}\hspace{1em}\forall l=1,\dots,\min\left\{ n,m\right\} .\label{Nielsen-1}
\end{alignat}
\end{thm}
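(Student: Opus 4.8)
The plan is to prove the two implications separately, with majorization theory as the main tool. Write $\rho_\chi=\text{Tr}_B|\chi\rangle\langle\chi|$ and $\rho_\eta=\text{Tr}_B|\eta\rangle\langle\eta|$ for the reduced states on $\mathcal{H}_A$; their eigenvalue lists are $\{\alpha_i^2\}$ and $\{\beta_j^2\}$, and condition \eqref{Nielsen-1}, together with $\sum_i\alpha_i^2=\sum_j\beta_j^2=1$, says exactly that (after padding the shorter list with zeros to a common length) the vector $\alpha^2:=(\alpha_i^2)$ is majorized by $\beta^2:=(\beta_j^2)$, i.e. $\alpha^2\prec\beta^2$. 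So the statement to be proved is: a deterministic LOCC transformation $|\chi\rangle\to|\eta\rangle$ exists if and only if $\alpha^2\prec\beta^2$.

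For \emph{necessity}, I would first invoke the structural reduction (the Lo--Popescu argument) that any deterministic LOCC transformation between two pure bipartite states can be realized by a one-way protocol: Alice applies a generalized measurement $\{M_k\}$ with $\sum_k M_k^\dagger M_k=I_A$, announces the outcome $k$, and Bob applies a local unitary $U_k$. Determinism forces, for every $k$ occurring with probability $p_k>0$,
\[
(M_k\otimes I_B)\,|\chi\rangle=\sqrt{p_k}\;(I_A\otimes U_k)\,|\eta\rangle .
\]
Tracing out Bob's system yields $M_k\,\rho_\chi\,M_k^\dagger=p_k\,\rho_\eta$. Putting $A_k:=M_k\,\rho_\chi^{1/2}$ one gets $\sum_k A_k^\dagger A_k=\rho_\chi$ while $A_kA_k^\dagger=p_k\,\rho_\eta$, so the nonzero spectrum of $A_k^\dagger A_k$ equals $p_k\{\beta_j^2\}$. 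The Ky Fan inequality (the decreasingly ordered spectrum of a sum of positive operators is majorized by the sum of the individually decreasingly ordered spectra) then gives $\alpha^2=\lambda^{\downarrow}(\rho_\chi)\prec\sum_k p_k\,\lambda^{\downarrow}(\rho_\eta)=\beta^2$ (using $\sum_k p_k=1$), which is \eqref{Nielsen-1}. Incidentally this also forces $m\le n$, recovering the fact that LOCC cannot increase the Schmidt rank.

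For \emph{sufficiency}, assume $\alpha^2\prec\beta^2$. By the Hardy--Littlewood--P\'olya and Birkhoff theorems $\alpha^2=D\beta^2$ for some doubly stochastic $D$, and by the classical $T$-transform characterization of majorization (Muirhead), $\alpha^2$ can be obtained from $\beta^2$ by a finite chain $\beta^2=\gamma^{(0)}\succ\gamma^{(1)}\succ\cdots\succ\gamma^{(N)}=\alpha^2$ in which each consecutive pair differs by a single $T$-transform. I would then induct on $N$: the case $N=0$ is handled by local unitaries, and the inductive step reduces to the following elementary claim: if $\mu\prec\nu$ differ by one $T$-transform, then the pure state with Schmidt coefficients $\sqrt{\mu}$ can be converted deterministically by LOCC into the pure state with Schmidt coefficients $\sqrt{\nu}$. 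For this, say the $T$-transform mixes coordinates $p$ and $q$ with weight $t\in[1/2,1)$, so that $\mu_p=t\nu_p+(1-t)\nu_q$ and $\mu_q=(1-t)\nu_p+t\nu_q$; Alice performs a two-outcome measurement $\{M_0,M_1\}$ acting as a scalar on every coordinate $\neq p,q$ and, on $\text{span}\{|p\rangle_A,|q\rangle_A\}$, as a diagonal matrix (outcome $0$) or an antidiagonal matrix (outcome $1$). A two-by-two computation fixes the entries and the probabilities, $p_0=t$ and $p_1=1-t$, so that $M_0^\dagger M_0+M_1^\dagger M_1=I_A$ and each post-measurement state equals $|\nu\rangle$ up to a local unitary on Bob's side (the identity for outcome $0$, the swap $|p\rangle_B\leftrightarrow|q\rangle_B$ for outcome $1$). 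Concatenating the $N$ steps produces the desired protocol $|\chi\rangle\to|\eta\rangle$.

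The real content sits in two places. One is the reduction to one-way protocols used in the necessity part: it is intuitively clear but requires noting that the global state stays pure throughout the protocol and that a local measurement on one side can be deferred and reattributed to the other side, so that without loss of generality only Alice measures; I would import this as a preliminary lemma or cite it. The other, and the only genuine calculation, is the explicit two-outcome measurement realizing a single $T$-transform together with the verification of Kraus completeness; this is also where one sees that a step which looks only probabilistic on a two-dimensional subsystem is rendered deterministic by Bob's classically conditioned corrections. Everything else is routine bookkeeping with majorization inequalities.
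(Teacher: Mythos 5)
The paper does not actually prove this statement: Theorem~\ref{(Nielsen)} is imported verbatim from Ref.~\citep{Nielsen-99}, so there is no in-paper proof to compare against. Your reconstruction is, in substance, Nielsen's original argument (the one in the PRL and in standard textbook treatments), and it is correct. For necessity, the chain $M_k\rho_\chi M_k^\dagger=p_k\rho_\eta$, $A_k:=M_k\rho_\chi^{1/2}$, $\sum_k A_k^\dagger A_k=\rho_\chi$, $A_kA_k^\dagger=p_k\rho_\eta$, followed by Ky Fan's maximum principle, does give $\sum_{i\le l}\alpha_i^2\le\sum_{i\le l}\beta_i^2$ for all $l$, and your side remark that $l=n$ forces $m\le n$ is right; the one genuinely external ingredient is the Lo--Popescu reduction to one-way protocols, which you correctly flag as a lemma to import (an alternative that avoids it is monotonicity of Vidal's monotones $E_l=\sum_{i\ge l}\lambda_i^{\downarrow}$, but your route is the standard one). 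For sufficiency, the asserted two-by-two computation does close: taking $M_0$ diagonal on $\mathrm{span}\{|p\rangle,|q\rangle\}$ with entries $\sqrt{p_0\nu_p/\mu_p},\ \sqrt{p_0\nu_q/\mu_q}$, $M_1$ antidiagonal with entries $\sqrt{p_1\nu_q/\mu_p},\ \sqrt{p_1\nu_p/\mu_q}$, and scalars $\sqrt{p_0},\sqrt{p_1}$ on the remaining coordinates, the completeness condition on the $p,q$ block reads $p_0\nu_p+p_1\nu_q=\mu_p$ and $p_0\nu_q+p_1\nu_p=\mu_q$, which is exactly the $T$-transform relation with $p_0=t$, $p_1=1-t$, while Bob's correction is the identity or the $p\leftrightarrow q$ swap as you say; the degenerate case $\mu_p=0$ or $\mu_q=0$ makes that step trivial, and $m\le n$ guarantees all intermediate states fit in the given Hilbert space. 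So the only caveats are that the reduction lemma is cited rather than proved and the central $2\times2$ verification is asserted rather than written out; both are acceptable, and with them filled in the proof stands.
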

This condition can be easily checked for any pair of pure bipartite
states. So if the problem involves such states, Nielsen's theorem
needs to be satisfied for all input-output pairs of states. However,
as we will see, even if that is the case, it does not guarantee that
the desired LOCC protocol exists at the set level. 

\subsection*{Distinguishability}

How well a given set of states can be distinguished can be quantified
by distinguishability measures, such as fidelity or the average probability
of success. So let us first explain the concept of a distinguishability
measure through a specific one, namely, fidelity \citep{BN-2013,Navascues-2008,Fuchs-2004}.
This brief discussion will capture the essential elements of any well-defined
measure. 

We can associate a given set of states $S=\left\{ \rho_{1},\dots,\rho_{n}\right\} $
with a probability distribution $\bm{p}=\left\{ p_{1},\dots,p_{n}\right\} $,
where we assume that $\rho_{i}$ appears with probability $p_{i}$.
Then, for a measurement (POVM) $\mathbb{M}=\left\{ M_{a}\right\} $
and a guessing strategy $\mathbb{G}:a\rightarrow\tau_{a}$ (maps the
measurement outcomes to new quantum states), the average fidelity
is given by 

\begin{alignat}{1}
\bm{F}\left(S;\mathbb{M},\mathbb{G}\right) & =\sum_{a,i}p_{i}\text{Tr}\left(\rho_{i}M_{a}\right)\text{Tr}\left(\rho_{i}\tau_{a}\right),\label{average fidelity}
\end{alignat}
where $0\leq\bm{F}\left(S;\mathbb{M},\mathbb{G}\right)\leq1$. The
fidelity (global optimum) is now obtained by optimizing the average
fidelity over all measurements and guessing strategies:
\begin{align}
\bm{F}\left(S\right) & =\sup_{\mathbb{M},\mathbb{G}}\bm{F}\left(S;\mathbb{M},\mathbb{G}\right).\label{fidelity-1}
\end{align}
Conceptually, fidelity quantifies how much can we learn about the
system that has been prepared in a state $\rho_{i}$ with probability
$p_{i}$. Note that if the states are orthogonal we can always correctly
identify the input state and then $\bm{F}\left(S\right)=1$ for any
distribution of prior probabilities. On the other hand, if the states
are nonorthogonal, we have $\bm{F}\left(S\right)<1$. The prior probabilities,
however, are crucial in computing $\bm{F}\left(S\right)$, for if
we change the prior probabilities, $\bm{F}\left(S\right)$ will also
change, in general. 

The above definition of fidelity can be suitably modified to accommodate
composite systems and LOCC, SEP, or PPT measurements. Suppose that
$S$ now corresponds to a set of bipartite or multipartite states.
Clearly, the definition of (global) fidelity as given by Eq.$\,$(\ref{fidelity-1})
will not change. The LOCC fidelity or local fidelity (LOCC or local
optimum), on the other hand, can be obtained simply by restricting
the class of measurements to LOCC. In particular, 
\begin{alignat}{1}
\bm{F}_{l}\left(S;\mathbb{M},\mathbb{G}\right) & =\sum_{a,i}p_{i}\text{Tr}\left(\rho_{i}M_{a}\right)\text{Tr}\left(\rho_{i}\tau_{a}\right),\;\text{\ensuremath{\mathbb{M}\in\text{LOCC}}};\label{average fidelity-1}\\
\bm{F}_{l}\left(S\right) & =\sup_{\mathbb{M},\mathbb{G}}\bm{F}_{l}\left(S;\mathbb{M},\mathbb{G}\right),\;\mathbb{M}\in\text{LOCC}.
\end{alignat}
So LOCC fidelity is obtained by optimizing the average LOCC fidelity
over all LOCC measurements and guessing strategies. Note that for
a set of locally indistinguishable orthogonal states, $\bm{F}\left(S\right)=1$
but $\bm{F}_{l}\left(S\right)<1$. 

In this paper, however, we will not choose any particular measure
for our analysis as any well-defined one will serve our purpose. Let
$\bm{D}$ be a measure of distinguishability, assumed to be well-defined
for any given class of measurements. Then for a given set $S$ of
bipartite or multipartite states with associated probabilities given
by $\bm{p}=\left\{ p_{1},\dots,p_{n}\right\} $, let $\bm{D}_{g}\left(S\right)$
and $\bm{D}_{l}\left(S\right)$ denote the global and LOCC optimum,
respectively. They naturally satisfy $\bm{D}_{l}\left(S\right)\leq\bm{D}_{g}\left(S\right)$
for any choice of probability distribution $\bm{p}$. 

Suppose under some deterministic quantum operation $\bm{\Lambda}$
we have $S_{1}\rightarrow S_{2}$ where $S_{1}$ and $S_{2}$ are
two sets of bipartite or multipartite states. Since distinguishability
cannot increase under deterministic quantum operations we have the
following observations: 
\begin{itemize}
\item If $\bm{\Lambda}$ belongs to the set of all quantum operations, then
for any $\bm{p}$ it holds that 
\begin{flalign}
\bm{D}_{g}\left(S_{1}\right) & \geq\bm{D}_{g}\left(S_{2}\right).\label{eglobal D inequality}
\end{flalign}
\item If $\bm{\Lambda}\in\text{LOCC}$ then we must also have
\begin{equation}
\bm{D}_{l}\left(S_{1}\right)\geq\bm{D}_{l}\left(S_{2}\right)\label{local D inequality}
\end{equation}
for any $\bm{p}$. 
\end{itemize}
The second inequality becomes significant when both the input and
output sets consist of mutually orthogonal states, for the first inequality
cannot help in this case. In such cases, one may apply the second
equality to rule out some kinds of transformations. For example, if
$S_{1}$ consists of locally indistinguishable states, but the states
in $S_{2}$ are locally distinguishable, then it is not possible to
achieve $S_{1}\rightarrow S_{2}$. More generally, if both are locally
indistinguishable and $\bm{D}_{l}\left(S_{1}\right)<\bm{D}_{l}\left(S_{2}\right)$,
then $S_{1}\nrightarrow S_{2}$ under LOCC. However, the second inequality
will not give us any useful information if both sets are locally distinguishable. 

\section{LOCC transformations between sets of pure states}

The discussions in the previous section show that the LOCC problem
is indeed special in some ways, for the states could be entanglement,
orthogonal, and locally distinguishable or indistinguishable. We will
now examine the question of the existence of a deterministic LOCC
transformation between two sets of bipartite pure states. 

\subsection*{Necessary conditions}

Necessary conditions are obtained from LOCC constraints on state transformation,
entanglement, and distinguishability. 
\begin{prop}
\label{necessary-conditions} Consider a bipartite quantum system
$\mathcal{H}=\mathcal{H}_{A}\otimes\mathcal{H}_{B}$ of finite dimensions.
Let $S_{\psi}=\left\{ \left|\psi_{1}\right\rangle ,\left|\psi_{2}\right\rangle ,\dots,\left|\psi_{n}\right\rangle \right\} $
and $S_{\phi}=\left\{ \left|\phi_{1}\right\rangle ,\left|\phi_{2}\right\rangle ,\dots,\left|\phi_{n}\right\rangle \right\} $
be the sets of input and output states, respectively. Suppose there
exists a deterministic LOCC under which $S_{\psi}\rightarrow S_{\phi}$.
Then: 

\noindent (a) Theorem \ref{(Nielsen)} holds for every pair $\left(\left|\psi_{i}\right\rangle ,\left|\phi_{i}\right\rangle \right)$,
$i=1,\dots,n$; 

\noindent (b) For any probability distribution $\bm{p}=\left\{ p_{1},p_{2},\dots,p_{n}\right\} $,
$0<p_{i}<1$, $i=1,\dots,n$, and the density operators
\begin{flalign}
\rho_{\psi,\bm{p}} & =\sum_{i=1}^{n}p_{i}\left|\psi_{i}\right\rangle \left\langle \psi_{i}\right|,\label{rho-psi-p}\\
\rho_{\phi,\bm{p}} & =\sum_{i=1}^{n}p_{i}\left|\phi_{i}\right\rangle \left\langle \phi_{i}\right|,\label{rho-phi-p}
\end{flalign}
it holds that 
\begin{alignat}{1}
\bm{E}\left(\rho_{\psi,\bm{p}}\right) & \geq\bm{E}\left(\rho_{\phi,\bm{p}}\right),\label{entanglement-inequality}
\end{alignat}
where $\bm{E}$ is any well-defined entanglement measure; 

\noindent (c) For any probability distribution $\bm{p}=\left\{ p_{1},p_{2},\dots,p_{n}\right\} $,
$0<p_{i}<1$, $i=1,\dots,n$, and the ensembles $S_{\psi,\bm{p}}=\left\{ p_{i},\left|\psi_{i}\right\rangle \right\} $
and $S_{\phi,\bm{p}}=\left\{ p_{i},\left|\phi_{i}\right\rangle \right\} $,
it holds that 
\begin{alignat}{1}
\bm{D}\left(S_{\psi,\bm{p}}\right) & \geq\bm{D}\left(S_{\phi,\bm{p}}\right),\label{D-inequality}
\end{alignat}
where $\bm{D}$ is a well-defined measure of distinguishability.
\end{prop}
\begin{proof}
There is nothing in particular to discuss about $\left(a\right)$
which is obvious. Note however that $\left(a\right)$ is stronger
than a plausible alternative: $\bm{E}\left(\psi_{i}\right)\geq\bm{E}\left(\phi_{i}\right)$,
$i=1,\dots,n$. That is because if $\left(a\right)$ is satisfied
then $\bm{E}\left(\psi_{i}\right)\geq\bm{E}\left(\phi_{i}\right)$
for every $i$, but the converse does not hold. In particular, there
exists states $\left|\psi\right\rangle $ and $\left|\phi\right\rangle $
satisfying $\bm{E}\left(\psi\right)\geq\bm{E}\left(\phi\right)$ but
$\left|\psi\right\rangle \nrightarrow\left|\phi\right\rangle $ under
deterministic LOCC \citep{Nielsen-99}. 

We now come to $\left(b\right)$. Since $S_{\text{\ensuremath{\psi}}}\rightarrow S_{\phi}$
is possible by a deterministic LOCC, then for any probability distribution
$\bm{p}=\left\{ p_{1},p_{2},\dots,p_{n}\right\} $, \emph{$0<p_{i}<1$,
$i=1,\dots,n$,} it holds that 
\begin{flalign*}
\left\{ \left(p_{1},\left|\psi_{1}\right\rangle \right),\left(p_{2},\left|\psi_{2}\right\rangle \right),\dots,\left(p_{n},\left|\psi_{n}\right\rangle \right)\right\}  & \rightarrow\left\{ \left(p_{1},\left|\phi_{1}\right\rangle \right),\left(p_{2},\left|\phi_{2}\right\rangle \right),\dots,\left(p_{n},\left|\phi_{n}\right\rangle \right)\right\} ,
\end{flalign*}
or equivalently, 
\begin{flalign*}
\rho_{\psi,\bm{p}} & \rightarrow\rho_{\phi,\bm{p}},
\end{flalign*}
where $\rho_{\psi,\bm{p}}$ and $\rho_{\phi,\bm{p}}$ are given by
(\ref{rho-psi-p}) and (\ref{rho-phi-p}) respectively. Because the
entanglement of a state cannot increase under deterministic LOCC,
(\ref{entanglement-inequality}) must be satisfied for all $\bm{p}$
and any well-defined measure of entanglement $\bm{E}$. 

To prove $\left(c\right)$, once again first note that $S_{\text{\ensuremath{\psi}}}\rightarrow S_{\phi}$
by a deterministic LOCC. The proof then follows from the fact that
the distinguishability of given set of states cannot increase under
a deterministic quantum operation, LOCC or otherwise. In other words,
the ensemble $S_{\psi,\bm{p}}$ cannot be less distinguishable than
$S_{\phi,\bm{p}}$ for any given probability distribution $\bm{p}$.
Hence, (\ref{D-inequality}) must be satisfied for any well-defined
measure of distinguishability $\bm{D}$ and all $\bm{p}$. 
\end{proof}
Let us now look at the implications of $\left(a\right)$, $\left(b\right)$,
and $\left(c\right)$. Specifically, we would like to know whether
they are independent of each other and what possible relations exist
between them. These questions are better answered by considering sets
of input-output pairs that satisfy each of them. 
\begin{prop}
\label{set-relations-A} Let $S_{a}$, $S_{b}$, and $S_{c}$ denote
the sets of all input-output pairs satisfying conditions $\left(a\right)$,
$\left(b\right)$, and $\left(c\right)$, respectively. The relations
\begin{flalign}
S_{x}\setminus\left(S_{y}\cup S_{z}\right) & \neq\emptyset\label{x-yz}
\end{flalign}

hold for $x\neq y\neq z\in\left\{ a,b,c\right\} $. 
\end{prop}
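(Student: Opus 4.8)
The plan is to establish the three assertions of the proposition --- one for each choice of which of $a,b,c$ is the distinguished condition $x$ --- by exhibiting, in each case, an explicit pair of finite-dimensional bipartite pure-state sets that lies in $S_x$ but in neither $S_y$ nor $S_z$. Every verification rests on three easily applied tests: Nielsen's theorem (Theorem \ref{(Nielsen)}), applied to the ordered squared-Schmidt vectors of the pairs $(\left|\psi_i\right\rangle,\left|\phi_i\right\rangle)$, decides $(a)$; the value of a concrete entanglement measure --- the negativity, or for pure averages the entropy of entanglement --- on the averaged states $\rho_{\psi,\bm p}$ and $\rho_{\phi,\bm p}$ of \eqref{rho-psi-p}--\eqref{rho-phi-p} decides $(b)$; and the value of a concrete distinguishability measure $\bm D$ --- the optimal LOCC discrimination probability when the sets are orthonormal, or the pairwise fidelity (equivalently, the optimal global discrimination probability) otherwise --- decides $(c)$. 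I will exploit the asymmetry that \emph{violating} $(b)$ or $(c)$ needs only a single witnessing pair $(\bm p,\bm E)$ or $(\bm p,\bm D)$ with $0<p_i<1$, whereas a condition that must \emph{hold} will be arranged to hold structurally, for all $\bm p$ at once.

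For $S_a\setminus(S_b\cup S_c)\neq\emptyset$ I would take three Bell states on the input side, $S_\psi=\{\left|\Phi^+\right\rangle,\left|\Phi^-\right\rangle,\left|\Psi^+\right\rangle\}$, and $S_\phi=\{\left|00\right\rangle,\left|11\right\rangle,\left|\Psi^+\right\rangle\}$ on the output side. Condition $(a)$ holds because each $\left|\psi_i\right\rangle$ is maximally entangled, so \eqref{Nielsen-1} is satisfied for any $\left|\phi_i\right\rangle$. Condition $(c)$ fails because no three Bell states can be discriminated by LOCC while $\{\left|00\right\rangle,\left|11\right\rangle,\left|\Psi^+\right\rangle\}$ can (Alice measures in the computational basis, Bob then does, and the pair of outcomes identifies the state), so the LOCC distinguishability strictly increases. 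Condition $(b)$ fails because for a suitable unequal distribution, e.g. $\bm p=(0.3,0.1,0.6)$, the averaged input $\rho_{\psi,\bm p}$ is Bell-diagonal with largest weight $0.6$ and hence has negativity $0.1$, whereas a short partial-transpose computation shows the averaged output $\rho_{\phi,\bm p}$ has strictly larger negativity.

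For $S_c\setminus(S_a\cup S_b)\neq\emptyset$ I would take $S_\psi=\{\left|00\right\rangle,\left|01\right\rangle\}$ and $S_\phi=\{\left|\Phi^+\right\rangle,\left|00\right\rangle\}$: condition $(c)$ holds for \emph{every} $\bm D$ and every $\bm p$ because the inputs are orthogonal product states, perfectly distinguishable both globally and locally, so no ensemble can beat them; condition $(a)$ fails at the first pair since the Schmidt vector of $\left|00\right\rangle$ strictly majorizes that of $\left|\Phi^+\right\rangle$, violating \eqref{Nielsen-1}; and condition $(b)$ fails because $\rho_{\psi,\bm p}$ is diagonal (hence separable) while $\rho_{\phi,\bm p}=p_1\left|\Phi^+\right\rangle\left\langle\Phi^+\right|+p_2\left|00\right\rangle\left\langle00\right|$ is NPT whenever $p_1>0$. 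For $S_b\setminus(S_a\cup S_c)\neq\emptyset$ I would pass to $\mathcal H_A\otimes\mathcal H_B$ with $\dim\mathcal H_A=\dim\mathcal H_B=3$ and take $S_\psi=\{\left|\psi\right\rangle,\left|\psi\right\rangle\}$ with $\left|\psi\right\rangle$ of squared-Schmidt vector $(0.4,0.4,0.2)$, and $S_\phi=\{\left|\phi_1\right\rangle,\left|\phi_2\right\rangle\}$ with $\left|\phi_1\right\rangle$ of squared-Schmidt vector $(0.5,0.25,0.25)$, $\left|\phi_2\right\rangle$ a product state, and $\left|\phi_1\right\rangle\perp\left|\phi_2\right\rangle$. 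Condition $(a)$ fails because $(0.4,0.4,0.2)$ and $(0.5,0.25,0.25)$ are \emph{incomparable} in the majorization order, so \eqref{Nielsen-1} cannot hold for the first pair; condition $(c)$ fails because two copies of $\left|\psi\right\rangle$ form a strictly less distinguishable ensemble than the orthogonal pair $\{\left|\phi_1\right\rangle,\left|\phi_2\right\rangle\}$; and condition $(b)$ holds because $\rho_{\psi,\bm p}=\left|\psi\right\rangle\left\langle\psi\right|$ is pure and, for the standard entanglement measures, $\bm E(\left|\psi\right\rangle)\ge\bm E(\left|\phi_i\right\rangle)$ for $i=1,2$, so by convexity $\bm E(\rho_{\psi,\bm p})=\bm E(\left|\psi\right\rangle)\ge p_1\bm E(\left|\phi_1\right\rangle)+p_2\bm E(\left|\phi_2\right\rangle)\ge\bm E(\rho_{\phi,\bm p})$ for every $\bm p$.

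The step I expect to be the main obstacle is the $S_a\setminus(S_b\cup S_c)$ case. Once $(a)$ holds, each $\left|\phi_i\right\rangle$ is individually reachable from $\left|\psi_i\right\rangle$, so at any $\bm p$ concentrated near a single index $(b)$ is automatically satisfied; the violation of $(b)$ must therefore be engineered at a genuinely spread $\bm p$, where the inputs are ``anti-aligned'' (their mixture losing entanglement) while the outputs are not --- which is exactly why one uses several Bell states on the input side and a basis mixing Bell-type and product states on the output side. At the same time, forcing the outputs to be strictly \emph{more} distinguishable tends to push them toward orthogonality and thereby to \emph{lower} the entanglement of their average, so channelling the distinguishability gap through \emph{local} rather than global distinguishability is what reconciles the two demands. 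Establishing that one $\bm p$ meets both requirements at once --- together with checking, in the $S_b$ case, that the chosen squared-Schmidt vectors are genuinely majorization-incomparable while still ordered the same way by every standard entanglement measure --- are the routine computations the complete proof must supply.
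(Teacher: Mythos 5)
Your strategy is the same as the paper's: one explicit input--output pair per case, checked directly against conditions $(a)$, $(b)$, $(c)$. Two of your three examples are sound. The $S_{c}\setminus\left(S_{a}\cup S_{b}\right)$ pair is essentially the paper's own example, and your $S_{a}\setminus\left(S_{b}\cup S_{c}\right)$ pair (three Bell states mapped to $\{|00\rangle,|11\rangle,|\Psi^{+}\rangle\}$) works: $(a)$ is immediate since the inputs are maximally entangled, $(c)$ fails because three Bell states are locally indistinguishable while your output set is locally distinguishable by the protocol you describe, and your negativity claim at $\bm{p}=(0.3,0.1,0.6)$ checks out (input negativity $0.1$, while the output's partial transpose has a negative eigenvalue $(0.4-\sqrt{0.4})/2\approx-0.116$). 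This differs from the paper's example for that case only in the choice of states and of the witnesses used to break $(b)$ and $(c)$.

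The genuine gap is in the $S_{b}\setminus\left(S_{a}\cup S_{c}\right)$ case, precisely the step you defer as "routine". A minor point first: $S_{\psi}=\{|\psi\rangle,|\psi\rangle\}$ is not a two-element set, and the formulation requires input and output sets of equal cardinality $n\geq2$; the paper avoids this by taking two distinct orthogonal inputs with identical Schmidt vectors. The serious problem is your verification of $(b)$. Condition $(b)$ quantifies over \emph{any} well-defined entanglement measure, and your justification rests on the claim that $\bm{E}(|\psi\rangle)\geq\bm{E}(|\phi_{1}\rangle)$ for every standard measure, followed by convexity. But you chose the squared-Schmidt vectors $(0.4,0.4,0.2)$ and $(0.5,0.25,0.25)$ to be majorization-incomparable precisely to kill $(a)$, and incomparability means some entanglement monotone \emph{increases} from $|\psi\rangle$ to $|\phi_{1}\rangle$: Vidal's monotone $E_{3}$ (sum of squared Schmidt coefficients beyond the second) gives $E_{3}(\psi)=0.2<0.25=E_{3}(\phi_{1})$, its convex-roof extension is a legitimate entanglement measure equal to $0.2$ on the pure state $\rho_{\psi,\bm{p}}=|\psi\rangle\langle\psi|$, and by continuity it exceeds $0.2$ on $\rho_{\phi,\bm{p}}$ once $p_{1}$ is close to $1$. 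So under the literal reading of $(b)$ your pair is not in $S_{b}$ at all; indeed, for continuous monotones the requirement "$(b)$ for all $\bm{p}$" forces $\bm{E}(\psi_{i})\geq\bm{E}(\phi_{i})$ for every Vidal monotone in the limit $p_{i}\rightarrow1$, i.e.\ it forces Nielsen, so no incomparable pair can witness $S_{b}\setminus S_{a}$ on that reading. This is also why your closing diagnosis that the $S_{a}$ case is the main obstacle is misplaced: the delicate case is $S_{b}$. The paper handles it by committing to one concrete, convex measure (entanglement of formation) and proving the inequality for all $\bm{p}$ at once, via a nondestructive local projection that distills at least $p_{1}$ ebit from $\rho_{\psi,\bm{p}}$ together with the convexity bound $\bm{E}(\rho_{\phi,\bm{p}})\leq p_{1}\bm{E}(\phi_{1})<p_{1}$. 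Your example can be repaired to that same standard (with entanglement of formation, $H(0.4,0.4,0.2)\approx1.52>1.5\geq p_{1}\bm{E}(\phi_{1})+p_{2}\cdot0$), but you must fix the measure explicitly, use its convexity (not every measure is convex), and replace the repeated input state; as written, "ordered the same way by every standard entanglement measure" is both unverified and, for entanglement monotones in general, false by your own construction.
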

\begin{proof}
\begin{flushleft}
Equation (\ref{x-yz}) tells us that for any choice of $x\neq y\neq z\in\left\{ a,b,c\right\} $,
there exist pairs of input and output sets $S_{\psi}$ and $S_{\phi}$
that satisfy $\left(x\right)$ but do not satisfy $\left(y\right)$
or $\left(z\right)$. We will now prove (\ref{x-yz}) for each case. 
\par\end{flushleft}
\begin{flushleft}
\[
\bm{S_{a}\setminus\left(S_{b}\cup S_{c}\right)\neq\emptyset}
\]
\par\end{flushleft}
$\hspace{1em}\hspace{1em}\hspace{1em}$%
\begin{tabular}{|c||c||c||c|}
\hline 
 $S_{\psi}$  &  $S_{\phi}$ & \multirow{1}{*}{Satisfied} & Not satisfied \tabularnewline[\doublerulesep]
\hline 
$\left|\psi_{1}\right\rangle =\frac{1}{\sqrt{2}}\left(\left|00\right\rangle +\left|11\right\rangle \right)$ & $\left|\phi_{1}\right\rangle =\frac{1}{\sqrt{2}}\left(\left|01\right\rangle +\left|10\right\rangle \right)$ &  & \tabularnewline
$\left|\psi_{2}\right\rangle =\frac{1}{\sqrt{2}}\left(\left|00\right\rangle -\left|11\right\rangle \right)$ & $\left|\phi_{2}\right\rangle =\left|00\right\rangle $ & $\left(a\right)$ & $\left(b\right)$ and $\left(c\right)$\tabularnewline
$\left|\psi_{3}\right\rangle =\left|01\right\rangle $ & $\left|\phi_{3}\right\rangle =\left|11\right\rangle $ &  & \tabularnewline[\doublerulesep]
\hline 
\end{tabular}\\
\\
\\
$\left(a\right)$ is satisfied: Straightforward to check. \\
$\left(b\right)$ is not satisfied: Consider the probability distribution
of the form $\bm{p}=\left\{ p,p,1-2p\right\} $, $0<p<1/2$. Then
one finds that $\rho_{\psi,\bm{p}}$ is separable for all $0<p<1/2$,
but $\rho_{\phi,\bm{p}}$ is entangled for $4/9<p<1/2$. The latter
can be obtained by applying the partial transposition criterion \citep{Partial=00003Dtranspose}
or by computing the concurrence \citep{Concurrence}. Thus the inequality
(\ref{entanglement-inequality}) does not hold whenever $4/9<p<1/2$.
Hence, $\left(b\right)$ is not satisfied. \\
$\left(c\right)$ is not satisfied: To prove this we will use the
following result proved in \citep{Walgate-Hardy-2002}: 
\begin{lem}
Three orthogonal two-qubit pure states can be distinguished by LOCC
if and only if at least two of those states are product states. 
\end{lem}
By applying the above lemma it immediately follows that $S_{\psi}$
is locally indistinguishable but $S_{\phi}$ is locally distinguishable.
So inequality (\ref{D-inequality}) is violated for any LOCC measure
of distinguishability (discussions on such measures can be found in
\citep{BN-2013}). Hence, $\left(c\right)$ is not satisfied.

\pagebreak

\[
\bm{S_{b}\setminus\left(S_{a}\cup S_{c}\right)\neq\emptyset}
\]
\begin{flushleft}
\begin{tabular}{|c||c||c||c|}
\hline 
 $S_{\psi}$  &  $S_{\phi}$ & Satisfied & Not satisfied \tabularnewline
\hline 
\noalign{\vskip\doublerulesep}
$\left|\psi_{1}\right\rangle =\frac{1}{\sqrt{2}}\left(\left|00\right\rangle +\left|11\right\rangle \right)$ & $\left|\phi_{1}\right\rangle =\sqrt{\frac{4}{5}}\left|00\right\rangle +\sqrt{\frac{1}{10}}\left|11\right\rangle +\sqrt{\frac{1}{10}}\left|22\right\rangle $ &  & \tabularnewline
$\left|\psi_{2}\right\rangle =\frac{1}{\sqrt{2}}\left(\left|22\right\rangle +\left|33\right\rangle \right)$ & $\left|\phi_{2}\right\rangle =\left|13\right\rangle $ & $\left(b\right)$ & $\left(a\right)$ and $\left(c\right)$\tabularnewline
$\left|\psi_{3}\right\rangle =\frac{1}{\sqrt{2}}\left(\left|22\right\rangle -\left|33\right\rangle \right)$ & $\left|\phi_{3}\right\rangle =\left|23\right\rangle $ &  & \tabularnewline
$\left|\psi_{4}\right\rangle =\frac{1}{\sqrt{2}}\left(\left|23\right\rangle +\left|32\right\rangle \right)$ & $\left|\phi_{4}\right\rangle =\left|33\right\rangle $ &  & \tabularnewline[\doublerulesep]
\hline 
\end{tabular}
\par\end{flushleft}
\smallskip
\begin{flushleft}
$\left(b\right)$ is satisfied: Consider the density operators $\rho_{\psi,\bm{p}}$
and $\rho_{\phi,\bm{p}}$ with $\bm{p}=\left\{ p_{1},p_{2},p_{3},p_{4}\right\} $,
$0<p_{i}<1$, $i=1,\dots,4$. Let $\bm{E}$ denote the entanglement
of formation \citep{entanglement-review} measured in ebits. We will
show that $\bm{E}\left(\rho_{\psi,\bm{p}}\right)\geq p_{1}$ but $\bm{E}\left(\rho_{\phi,\bm{p}}\right)<p_{1}$. 
\par\end{flushleft}
First, note that we can distill at least $p_{1}$ ebit from $\rho_{\psi,\bm{p}}$.
To see this, consider the following LOCC protocol. Alice performs
a binary measurement that distinguishes between the two orthogonal
subspaces spanned by $\left\{ \left|0\right\rangle ,\left|1\right\rangle \right\} $
and $\left\{ \left|2\right\rangle ,\left|3\right\rangle \right\} $.
If she obtains the first outcome, she and Bob end up sharing the state
$\left|\psi_{1}\right\rangle $ which has one ebit of entanglement.
The probability of getting this outcome is $p_{1}$, so the distillable
entanglement is at least $p_{1}$ ebit. Since distillable entanglement
is a lower bound on the entanglement of formation \citep{entanglement-review},
we have $\bm{E}\left(\rho_{\psi,\bm{p}}\right)\geq p_{1}$. Now,
\begin{alignat*}{1}
\bm{E}\left(\rho_{\phi,\bm{p}}\right) & \leq\sum_{i=1}^{4}p_{i}\bm{E}\left(\phi_{i}\right),\\
 & =p_{1}\bm{E}\left(\phi_{1}\right),\\
 & <p_{1}
\end{alignat*}
since $\bm{E}\left(\phi_{1}\right)<1$ (one could easily check) and
$\bm{E}\left(\phi_{i}\right)=0$ for $i=2,3,4$. So the inequality
(\ref{entanglement-inequality}) is satisfied for all $\bm{p}$. \\
$\left(a\right)$ is not satisfied: This follows from the observation
that Nielsen criterion is violated for the pair $\left(\left|\psi_{1}\right\rangle ,\left|\phi_{1}\right\rangle \right)$.
\\
$\left(c\right)$ is not satisfied: $S_{\psi}$ is locally indistinguishable,
for it contains a locally indistinguishable subset $\left\{ \left|\psi_{2}\right\rangle ,\left|\psi_{3}\right\rangle ,\left|\psi_{4}\right\rangle \right\} $
\citep{B-IQC-2015}, whereas $S_{\phi}$ is locally distinguishable,
for a local measurement in the computational basis perfectly distinguishes
the states.

\pagebreak

\[
\bm{S_{c}\setminus\left(S_{a}\cup S_{b}\right)\neq\emptyset}
\]

$\hspace{1em}\hspace{1em}\hspace{1em}\hspace{1em}\hspace{1em}$%
\begin{tabular}{|c||c||c||c|}
\hline 
\noalign{\vskip\doublerulesep}
$S_{\psi}$  & $S_{\phi}$ & \multirow{1}{*}{Satisfied} & Not satisfied \tabularnewline[\doublerulesep]
\hline 
$\left|\psi_{1}\right\rangle =\left|01\right\rangle $ & $\left|\phi_{1}\right\rangle =\frac{1}{\sqrt{2}}\left(\left|00\right\rangle +\left|11\right\rangle \right)$ & $\left(c\right)$ & $\left(a\right)$ and $\left(b\right)$\tabularnewline
$\left|\psi_{2}\right\rangle =\left|10\right\rangle $ & $\left|\phi_{2}\right\rangle =\left|00\right\rangle $ &  & \tabularnewline
\hline 
\end{tabular}\\
\\
\\
The proof that $\left(c\right)$ is satisfied is simple: $S_{\psi}$
is a distinguishable set, whereas $S_{\phi}$ contains nonorthogonal
states, which cannot be perfectly distinguished. \\
It is also easy to see that $\left(a\right)$ and $\left(b\right)$
are not satisfied. The first follows from the observation that $\left|\psi_{1}\right\rangle $
is a product state but $\left|\phi_{1}\right\rangle $ is entangled.
And the second follows from the following easily checkable properties:
$\rho_{\psi,\bm{p}}$ is separable but $\rho_{\phi,\bm{p}}$ is entangled
(apply the partial-transposition criterion \citep{entanglement-review}). 
\end{proof}
We will now show that for any choice of $x\neq y\neq z\in\left\{ a,b,c\right\} $,
one can find pairs of input and output sets that satisfy both $\left(x\right)$
and $\left(y\right)$ but not $\left(z\right)$. 
\begin{prop}
\label{set-relations-B} Let $S_{a}$, $S_{b}$, and $S_{c}$ denote
the sets of all input-output pairs satisfying conditions $\left(a\right)$,
$\left(b\right)$, and $\left(c\right)$, respectively. The relations
\begin{flalign}
\left(S_{x}\cap S_{y}\right)\setminus S_{z} & \neq\emptyset\label{xy-z}
\end{flalign}

hold for $x\neq y\neq z\in\left\{ a,b,c\right\} $. 
\end{prop}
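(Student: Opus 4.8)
The plan is to prove~(\ref{xy-z}) by the explicit-example method already used for~(\ref{x-yz}) in Proposition~\ref{set-relations-A}: for each of the three cases $z=c$, $z=b$, $z=a$ (the pair $x,y$ being interchangeable) I would exhibit an input--output pair $(S_\psi,S_\phi)$ of bipartite pure sets lying in $S_x\cap S_y$ but not in $S_z$. It helps to keep in mind that conditions $(b)$ and $(c)$ are universally quantified over the probability distribution $\bm{p}$, so to \emph{falsify} one of them it suffices to produce a single $\bm{p}$, whereas to \emph{verify} one of them the relevant inequality must hold for every $\bm{p}$; each example below is therefore arranged so that the favourable side is identically zero, identically maximal, or perfectly distinguishable.

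For $(S_a\cap S_b)\setminus S_c$, let both sets consist of product states, with non-orthogonal inputs and orthogonal outputs, e.g.\ $S_\psi=\{\,|00\rangle,\ \tfrac1{\sqrt2}(|00\rangle+|01\rangle)\,\}$ and $S_\phi=\{\,|00\rangle,\ |01\rangle\,\}$. Condition $(a)$ is immediate because every state has Schmidt rank one, so Nielsen's criterion reads $1\le 1$; condition $(b)$ is immediate because $\rho_{\psi,\bm{p}}$ and $\rho_{\phi,\bm{p}}$ are separable --- indeed of product form --- for every $\bm{p}$, so~(\ref{entanglement-inequality}) holds with both sides zero. But $S_\psi$ contains non-orthogonal states and hence cannot be perfectly distinguished, while $S_\phi$ is perfectly distinguished by a local measurement, so~(\ref{D-inequality}) fails for any reasonable distinguishability measure, exactly as in Proposition~\ref{set-relations-A}.

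For $(S_a\cap S_c)\setminus S_b$, I would use the fact that an equal mixture of two orthogonal Bell states is separable. Take $S_\psi=\{\,|\Phi^+\rangle,\ |\Psi^+\rangle\,\}$ with $|\Phi^+\rangle=\tfrac1{\sqrt2}(|00\rangle+|11\rangle)$ and $|\Psi^+\rangle=\tfrac1{\sqrt2}(|01\rangle+|10\rangle)$, and $S_\phi=\{\,|\Phi^+\rangle,\ |00\rangle\,\}$ (the variant $S_\phi=\{\,|\Phi^+\rangle,\ |\Phi^+\rangle\,\}$ works equally well). Condition $(a)$ holds because a maximally entangled two-qubit state LOCC-converts to an arbitrary two-qubit state: the majorization inequalities of Theorem~\ref{(Nielsen)} hold since every two-qubit state has largest squared Schmidt coefficient at least $1/2$. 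Condition $(c)$ holds because the \emph{input} states are orthogonal, hence perfectly LOCC-distinguishable, so $\bm{D}(S_{\psi,\bm{p}})$ is maximal for every $\bm{p}$. But for $\bm{p}=\{1/2,1/2\}$ the state $\rho_{\psi,\bm{p}}=\tfrac12|\Phi^+\rangle\langle\Phi^+|+\tfrac12|\Psi^+\rangle\langle\Psi^+|=\tfrac12|{+}{+}\rangle\langle{+}{+}|+\tfrac12|{-}{-}\rangle\langle{-}{-}|$ is separable, while $\rho_{\phi,\bm{p}}$ is entangled (a one-line partial-transpose check), so~(\ref{entanglement-inequality}) fails and $(b)$ is not satisfied.

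For $(S_b\cap S_c)\setminus S_a$ --- the case I expect to be the main obstacle --- I would reuse the Nielsen-violating pair of Proposition~\ref{set-relations-A}, $|\psi_1\rangle=\tfrac1{\sqrt2}(|00\rangle+|11\rangle)\mapsto|\phi_1\rangle=\sqrt{4/5}\,|00\rangle+\sqrt{1/10}\,|11\rangle+\sqrt{1/10}\,|22\rangle$, and attach a second, maximally entangled pair living on a disjoint block of local levels, say $|\psi_2\rangle=\tfrac1{\sqrt2}(|22\rangle+|33\rangle)\mapsto|\phi_2\rangle=|33\rangle$. Then $(a)$ fails, since Nielsen's criterion already fails on the first pair; and $(c)$ holds with equality, since $S_\psi$, and likewise $S_\phi$, consists of two orthogonal pure states and is therefore perfectly LOCC-distinguishable. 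The delicate point is $(b)$: because $|\psi_1\rangle$ and $|\psi_2\rangle$ are supported on locally orthogonal $2\times2$ blocks, each carrying one ebit, the protocol in which Alice first measures which block she holds and then locally rotates $|\psi_2\rangle$ into $|\Phi^+\rangle$ deterministically LOCC-converts $\rho_{\psi,\bm{p}}$ into a Bell state; by monotonicity together with convexity this pins $\bm{E}(\rho_{\psi,\bm{p}})$ at one ebit for every $\bm{p}$, whereas $\bm{E}(\rho_{\phi,\bm{p}})\le p_1\,\bm{E}(\phi_1)<1$ when $\bm{E}$ is the entanglement of formation --- exactly the estimate from Proposition~\ref{set-relations-A}. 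Hence~(\ref{entanglement-inequality}) holds for all $\bm{p}$, so $(b)$ is satisfied while $(a)$ is not. This case is delicate precisely because a failure of Nielsen's criterion on one pair, combined with the freedom to load almost all the probability weight onto that pair, will in general pull $\bm{E}(\rho_{\psi,\bm{p}})$ below $\bm{E}(\rho_{\phi,\bm{p}})$ and destroy $(b)$; placing the input states on locally orthogonal, maximally entangled blocks is what makes $\bm{E}(\rho_{\psi,\bm{p}})$ constant in $\bm{p}$ and thereby decouples $(b)$ from the Nielsen violation.
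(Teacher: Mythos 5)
Your proposal is correct and follows essentially the same route as the paper: an explicit input--output pair for each of the three cases, verified with Nielsen's criterion for $(a)$, separability/entanglement of the mixtures $\rho_{\psi,\bm{p}},\rho_{\phi,\bm{p}}$ for $(b)$, and perfect (local) distinguishability versus nonorthogonality for $(c)$. Your examples for $\left(S_a\cap S_b\right)\setminus S_c$ and $\left(S_a\cap S_c\right)\setminus S_b$ differ from the paper's in the specific states chosen but the verification logic is parallel, and your example for $\left(S_b\cap S_c\right)\setminus S_a$ coincides with the paper's, argued the same way.
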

\begin{proof}
As before, we will prove (\ref{xy-z}) for each case. 
\[
\bm{\left(S_{a}\cap S_{b}\right)\setminus S_{c}\neq\emptyset}
\]

$\hspace{1em}\hspace{1em}\hspace{1em}\hspace{1em}\hspace{1em}$%
\begin{tabular}{|c||c||c||c|}
\hline 
\noalign{\vskip\doublerulesep}
$S_{\psi}$  & $S_{\phi}$ & \multirow{1}{*}{Satisfied} & Not satisfied \tabularnewline[\doublerulesep]
\hline 
$\left|\psi_{1}\right\rangle =\frac{1}{\sqrt{2}}\left(\left|00\right\rangle +\left|11\right\rangle \right)$ & $\left|\phi_{1}\right\rangle =\left|01\right\rangle $ & $\left(a\right)$ and $\left(b\right)$  & $\left(c\right)$ \tabularnewline
$\left|\psi_{2}\right\rangle =\left|00\right\rangle $ & $\left|\phi_{2}\right\rangle =\left|10\right\rangle $ &  & \tabularnewline[\doublerulesep]
\hline 
\end{tabular}\\
\\
\\
Clearly, $\left(a\right)$ is satisfied; $\left(b\right)$ is also
satisfied because $\rho_{\psi,\bm{p}}$ is entangled but $\rho_{\phi,\bm{p}}$
is separable for all $\bm{p}=\left\{ p,1-p\right\} $, $p\in\left(0,1\right)$. 

Note, however, that $\left(c\right)$ cannot be satisfied. That is
because $S_{\psi}$ contains nonorthogonal states, which cannot be
perfectly distinguished, whereas the states in $S_{\phi}$ are mutually
orthogonal and can be perfectly distinguished.

\[
\bm{\left(S_{a}\cap S_{c}\right)\setminus S_{b}\neq\emptyset}
\]

$\hspace{1em}\hspace{1em}$%
\begin{tabular}{|c||c||c||c|}
\hline 
 $S_{\psi}$  &  $S_{\phi}$ & \multirow{1}{*}{Satisfied} & Not satisfied \tabularnewline[\doublerulesep]
\hline 
$\left|\psi_{1}\right\rangle =\frac{1}{\sqrt{2}}\left(\left|00\right\rangle +\left|11\right\rangle \right)$ & $\left|\phi_{1}\right\rangle =\alpha\left|00\right\rangle +\beta\left|11\right\rangle $ & $\left(a\right)$ and $\left(c\right)$ & $\left(b\right)$\tabularnewline
$\left|\psi_{2}\right\rangle =\frac{1}{\sqrt{2}}\left(\left|00\right\rangle -\left|11\right\rangle \right)$ & $\left|\phi_{2}\right\rangle =\beta\left|00\right\rangle +\alpha\left|11\right\rangle $ &  & \tabularnewline[\doublerulesep]
\hline 
\end{tabular}\\
\\
where $\alpha>\beta>0$ and $\alpha^{2}+\beta^{2}=1$. \\
\\
Nielsen's criterion is satisfied for each $i=1,2$, so $\left(a\right)$
is satisfied. Condition $\left(c\right)$ is also satisfied because
the input states are distinguishable but the output states, being
nonorthogonal, are not. 

On the other hand, for $p=\frac{1}{2}$, one finds that $\rho_{\psi,\bm{p}}=p\left|\psi_{1}\right\rangle \left\langle \psi_{1}\right|+\left(1-p\right)\left|\psi_{2}\right\rangle \left\langle \psi_{2}\right|$
is separable but $\rho_{\phi,\bm{p}}=p\left|\phi_{1}\right\rangle \left\langle \phi_{1}\right|+\left(1-p\right)\left|\phi_{2}\right\rangle \left\langle \phi_{2}\right|$
is entangled. So the inequality (\ref{entanglement-inequality}) cannot
be satisfied for all probability distributions. Hence, $\left(b\right)$
is not satisfied. 

\[
\bm{\left(S_{b}\cap S_{c}\right)\setminus S_{a}\neq\emptyset}
\]
\begin{flushleft}
\begin{tabular}{|c||c||c||c|}
\hline 
 $S_{\psi}$  &  $S_{\phi}$ & Satisfied & Not satisfied \tabularnewline
\hline 
\noalign{\vskip\doublerulesep}
$\left|\psi_{1}\right\rangle =\frac{1}{\sqrt{2}}\left(\left|00\right\rangle +\left|11\right\rangle \right)$ & $\left|\phi_{1}\right\rangle =\sqrt{\frac{4}{5}}\left|00\right\rangle +\sqrt{\frac{1}{10}}\left|11\right\rangle +\sqrt{\frac{1}{10}}\left|22\right\rangle $ & $\left(b\right)$ and $\left(c\right)$ & $\left(a\right)$\tabularnewline
$\left|\psi_{2}\right\rangle =\frac{1}{\sqrt{2}}\left(\left|22\right\rangle +\left|33\right\rangle \right)$ & $\left|\phi_{2}\right\rangle =\left|33\right\rangle $ &  & \tabularnewline
\hline 
\end{tabular}
\par\end{flushleft}
\begin{flushleft}
\smallskip
\par\end{flushleft}
\begin{flushleft}
Let $\bm{E}$ denote the entanglement of formation measured in ebits.
Then: $\bm{E}\left(\psi_{1}\right)=\bm{E}\left(\psi_{2}\right)=1$,
$\bm{E}\left(\phi_{1}\right)<1$, and $\bm{E}\left(\phi_{2}\right)=0$.
\\
$\left(b\right)$ is satisfied: Consider the density matrices $\rho_{\psi,\bm{p}}$
and $\rho_{\phi,\bm{p}}$, where $\bm{p}=\left\{ p,1-p\right\} $,
$p\in\left(0,1\right)$. First, note that that one ebit can be distilled
from $\rho_{\psi,\bm{p}}$ for any $\bm{p}$ using deterministic LOCC.
The protocol is similar to the one discussed earlier, so we omit the
details. Because distillable entanglement is a lower bound on the
entanglement of formation, we have $\bm{E}\left(\rho_{\psi,\bm{p}}\right)\geq1$.
On the other hand, $\bm{E}\left(\rho_{\phi,\bm{p}}\right)\leq p\bm{E}\left(\phi_{1}\right)+\left(1-p\right)\bm{E}\left(\phi_{2}\right)<p<1$.
Therefore, $\bm{E}\left(\rho_{\psi,\bm{p}}\right)>\bm{E}\left(\rho_{\phi,\bm{p}}\right)$
for all $\bm{p}$. So $\left(b\right)$ is satisfied. \\
$\left(c\right)$ is satisfied: Both sets are distinguishable, locally
or globally.\\
$\left(a\right)$ is not satisfied: That is because the transformation
$\left|\psi_{1}\right\rangle \rightarrow\left|\phi_{1}\right\rangle $
is not possible by deterministic LOCC as Nielsen's criterion is violated. 
\par\end{flushleft}
\end{proof}
\begin{rem*}
Propositions \ref{set-relations-A} and \ref{set-relations-B} establish
independence of the conditions $\left(a\right)$, $\left(b\right)$,
and $\left(c\right)$. 
\end{rem*}
\begin{prop}
\label{set-relation-C} Let $S_{a}$, $S_{b}$, and $S_{c}$ denote
the sets of all input-output pairs satisfying conditions $\left(a\right)$,
$\left(b\right)$, and $\left(c\right)$, respectively. Then the following
relation holds:
\begin{flalign}
\mathcal{S}_{a}\cap\mathcal{S}_{b}\cap\mathcal{S}_{c} & \neq\emptyset.\label{xyz}
\end{flalign}
\end{prop}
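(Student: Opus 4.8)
The plan is to establish (\ref{xyz}) by exhibiting a single input--output pair that lies in all three sets, and the cheapest such witness is essentially trivial. Take $S_{\phi}=S_{\psi}$, realized by $\bm{L}=\mathrm{id}$. Then every pair $\left(\left|\psi_{i}\right\rangle ,\left|\phi_{i}\right\rangle \right)$ is identical, so Nielsen's inequalities (\ref{Nielsen-1}) hold (with equality) and $\left(a\right)$ is satisfied; moreover $\rho_{\psi,\bm{p}}=\rho_{\phi,\bm{p}}$ for every $\bm{p}$, so (\ref{entanglement-inequality}) and (\ref{D-inequality}) hold with equality for every well-defined entanglement measure $\bm{E}$ and every well-defined distinguishability measure $\bm{D}$. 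This already proves $\mathcal{S}_{a}\cap\mathcal{S}_{b}\cap\mathcal{S}_{c}\neq\emptyset$.

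For a marginally more informative witness I would take $S_{\phi}$ to be the image of $S_{\psi}$ under a fixed pair of local unitaries, $\left|\phi_{i}\right\rangle =\left(U_{A}\otimes U_{B}\right)\left|\psi_{i}\right\rangle $ for all $i$; concretely one may use $S_{\psi}=\left\{ \tfrac{1}{\sqrt{2}}\left(\left|00\right\rangle +\left|11\right\rangle \right),\tfrac{1}{\sqrt{2}}\left(\left|00\right\rangle -\left|11\right\rangle \right)\right\} $ and $S_{\phi}=\left\{ \tfrac{1}{\sqrt{2}}\left(\left|01\right\rangle +\left|10\right\rangle \right),\tfrac{1}{\sqrt{2}}\left(\left|01\right\rangle -\left|10\right\rangle \right)\right\} $, with $U_{A}$ the identity and $U_{B}$ the qubit bit-flip $\left|0\right\rangle \leftrightarrow\left|1\right\rangle $. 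Condition $\left(a\right)$ holds because local unitaries preserve Schmidt coefficients, so (\ref{Nielsen-1}) holds with equality for each $i$ (indeed $\left|\psi_{i}\right\rangle \to\left|\phi_{i}\right\rangle $ is itself a deterministic LOCC transformation). For $\left(b\right)$ and $\left(c\right)$ one notes that $\rho_{\phi,\bm{p}}=\left(U_{A}\otimes U_{B}\right)\rho_{\psi,\bm{p}}\left(U_{A}\otimes U_{B}\right)^{\dagger}$ for every $\bm{p}$; since local unitaries are invertible LOCC maps and every well-defined entanglement measure and distinguishability measure is by definition invariant under local unitaries, $\bm{E}\left(\rho_{\phi,\bm{p}}\right)=\bm{E}\left(\rho_{\psi,\bm{p}}\right)$ and $\bm{D}\left(S_{\phi,\bm{p}}\right)=\bm{D}\left(S_{\psi,\bm{p}}\right)$ for all $\bm{p}$, so (\ref{entanglement-inequality}) and (\ref{D-inequality}) hold (again with equality). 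Hence the pair lies in $\mathcal{S}_{a}\cap\mathcal{S}_{b}\cap\mathcal{S}_{c}$.

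There is no genuine obstacle here; the only subtlety worth stating is that $\left(b\right)$ and $\left(c\right)$ must hold \emph{for all} probability distributions $\bm{p}$ and \emph{every} well-defined measure, and this is precisely why the identity/local-unitary relation between the two sets is the natural choice: it makes the ensembles $S_{\psi,\bm{p}}$ and $S_{\phi,\bm{p}}$ (local-)unitarily equivalent simultaneously for every $\bm{p}$, so all three families of inequalities collapse to equalities at once. If one wanted a ``non-degenerate'' witness --- a pair for which $S_{\psi}\to S_{\phi}$ entails a strict loss of entanglement while still obeying $\left(a\right)$, $\left(b\right)$, and $\left(c\right)$ --- I would instead search among sets of orthogonal, locally distinguishable states whose $\bm{p}$-mixtures never gain entanglement, but the bare non-emptiness asserted in Proposition \ref{set-relation-C} does not require this.
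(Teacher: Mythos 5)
Your proposal is correct and matches the paper's own argument: the paper likewise proves non-emptiness by noting that any pair achievable by deterministic LOCC satisfies all three necessary conditions, and its ``simple example'' is exactly your witness, $\left|\phi_{i}\right\rangle =U\otimes V\left|\psi_{i}\right\rangle$ for a fixed local unitary (the paper then adds a more instructive nondestructive identify-and-prepare example, which is optional for non-emptiness). Your direct verification of $\left(a\right)$, $\left(b\right)$, $\left(c\right)$ via local-unitary invariance is a fine substitute for invoking necessity from Proposition \ref{necessary-conditions}.
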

\begin{proof}
There are many examples of input-output pairs $\left(S_{\psi},S_{\phi}\right)$
for which $S_{\psi}\rightarrow S_{\phi}$ by deterministic LOCC. Any
such pair satisfies all three conditions $\left(a\right)$, $\left(b\right)$,
and $\left(c\right)$ and therefore belongs to the set $\mathcal{S}_{a}\cap\mathcal{S}_{b}\cap\mathcal{S}_{c}$.
A simple example is where the output states $\left\{ \left|\phi_{i}\right\rangle \right\} $
are related to the corresponding input states $\left|\psi_{i}\right\rangle $
by some fixed local unitary operator: 
\begin{flalign*}
\left|\phi_{i}\right\rangle  & =U\otimes V\left|\psi_{i}\right\rangle ,i=1,\dots,n.
\end{flalign*}
A somewhat more nontrivial and instructive example is the following
pair of $S_{\psi}$ and $S_{\phi}$:
\begin{eqnarray*}
\left|\psi_{1}\right\rangle =\frac{1}{\sqrt{2}}\left(\left|00\right\rangle +\left|11\right\rangle \right) & \hspace{1em}\hspace{1em} & \left|\phi_{1}\right\rangle =\sqrt{\frac{4}{5}}\left|00\right\rangle +\sqrt{\frac{1}{5}}\left|11\right\rangle \\
\left|\psi_{2}\right\rangle =\frac{1}{\sqrt{2}}\left(\left|22\right\rangle +\left|33\right\rangle \right) & \hspace{1em}\hspace{1em} & \left|\phi_{2}\right\rangle =\left|22\right\rangle 
\end{eqnarray*}
It is enough to show there exists a deterministic LOCC transformation
under which $S_{\psi}\rightarrow S_{\phi}$. The protocol is an IP
strategy. First, Alice (or Bob) performs an orthogonal measurement
that distinguishes the subspaces spanned by $\left\{ \left|0\right\rangle ,\left|1\right\rangle \right\} $
and $\left\{ \left|2\right\rangle ,\left|3\right\rangle \right\} $.
The outcome corresponding to $\left\{ \left|0\right\rangle ,\left|1\right\rangle \right\} $
implies that they hold $\left|\psi_{1}\right\rangle $, otherwise,
$\left|\psi_{2}\right\rangle $. Note that the measurement does not
change the input state in any way. Once they correctly identify the
input, they transform it to the desired output state using a deterministic
LOCC protocol. That such a protocol exists for each input-output pair
follows from Theorem \ref{(Nielsen)}.
\end{proof}
\begin{rem*}
Note that, although the output set contains an entangled state, the
IP strategy works in this case because the local protocol identifying
the input state is \emph{nondestructive}; that is, the input state
remains intact during the discrimination process.
\end{rem*}

\section{Are the necessary conditions sufficient? }

It is reasonable to ask whether the conditions in Proposition \ref{necessary-conditions}
are also sufficient. The answer, unfortunately, turns out to be no. 
\begin{prop}
\label{insufficient} The set of necessary conditions in Proposition
\ref{necessary-conditions} is not sufficient. 
\end{prop}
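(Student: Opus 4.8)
The plan is to prove Proposition~\ref{insufficient} by exhibiting a single input--output pair that satisfies conditions $(a)$, $(b)$ and $(c)$ simultaneously but for which \emph{no} deterministic LOCC protocol---in fact no separable operation---realizes the set transformation. A convenient candidate is the $n=2$ pair consisting of the two maximally entangled Bell states
\[
\left|\psi_{1}\right\rangle =\tfrac{1}{\sqrt{2}}\left(\left|00\right\rangle +\left|11\right\rangle \right),\qquad\left|\psi_{2}\right\rangle =\tfrac{1}{\sqrt{2}}\left(\left|01\right\rangle +\left|10\right\rangle \right),
\]
together with the two mutually orthogonal, partially entangled output states
\[
\left|\phi_{1}\right\rangle =\cos\alpha\left|00\right\rangle +\sin\alpha\left|11\right\rangle ,\qquad\left|\phi_{2}\right\rangle =\sin\alpha\left|00\right\rangle -\cos\alpha\left|11\right\rangle ,
\]
for a fixed $\alpha\in\left(0,\pi/4\right)$ (for concreteness $\cos^{2}\alpha=3/4$); note that $\left|\phi_{1}\right\rangle $ and $\left|\phi_{2}\right\rangle $ both lie in $\mathrm{span}\{\left|00\right\rangle ,\left|11\right\rangle \}$. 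This realizes the scenario anticipated above: the inputs are locally distinguishable, yet any LOCC discrimination of two orthogonal maximally entangled states necessarily destroys their entanglement, so the naive identify-and-prepare strategy cannot deliver the entangled outputs.

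Conditions $(a)$ and $(c)$ are immediate. For $(a)$, each $\left|\psi_{i}\right\rangle $ has Schmidt vector $\left(1/\sqrt{2},1/\sqrt{2}\right)$, which is majorized by the Schmidt vector of any bipartite state of Schmidt rank at most two whose larger Schmidt coefficient squared is at least $1/2$; both $\left|\phi_{i}\right\rangle $ qualify, so Theorem~\ref{(Nielsen)} holds for $i=1,2$. For $(c)$, $\{\left|\psi_{1}\right\rangle ,\left|\psi_{2}\right\rangle \}$ and $\{\left|\phi_{1}\right\rangle ,\left|\phi_{2}\right\rangle \}$ are each pairs of mutually orthogonal pure states, hence each is perfectly distinguishable by LOCC~\citep{Walgate+2000}, so $\bm D\left(S_{\psi,\bm p}\right)=\bm D\left(S_{\phi,\bm p}\right)$ takes its maximal value for every $\bm p$ and every distinguishability measure. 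For $(b)$ I would work through the concurrence: with $\bm p=\{p,1-p\}$, $\rho_{\psi,\bm p}$ is Bell-diagonal with concurrence $\left|2p-1\right|$, while $\rho_{\phi,\bm p}$ is supported on $\mathrm{span}\{\left|00\right\rangle ,\left|11\right\rangle \}$ with concurrence $\left|2p-1\right|\sin 2\alpha$; since $\sin 2\alpha<1$ and the entanglement of formation is a monotone function of the concurrence for these two-qubit states, $\bm E\left(\rho_{\psi,\bm p}\right)\ge\bm E\left(\rho_{\phi,\bm p}\right)$ for all $\bm p$.

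The substance is the impossibility claim, which I would establish for the larger class of separable operations. Suppose $\bm L\left(\rho\right)=\sum_{k}\left(A_{k}\otimes B_{k}\right)\rho\left(A_{k}\otimes B_{k}\right)^{\dagger}$, with $\sum_{k}\left(A_{k}\otimes B_{k}\right)^{\dagger}\left(A_{k}\otimes B_{k}\right)=I$, satisfies $\bm L\left(\left|\psi_{i}\right\rangle \left\langle \psi_{i}\right|\right)=\left|\phi_{i}\right\rangle \left\langle \phi_{i}\right|$ for $i=1,2$. Since a rank-one positive operator can be a sum of positive operators only if every summand has its range inside that of the sum, each branch obeys $\left(A_{k}\otimes B_{k}\right)\left|\psi_{i}\right\rangle \in\mathbb{C}\left|\phi_{i}\right\rangle $. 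Using $\left|\psi_{2}\right\rangle =\left(I\otimes\sigma_{x}\right)\left|\psi_{1}\right\rangle $ and the ricochet identity $\left(M\otimes I\right)\left|\psi_{1}\right\rangle =\left(I\otimes M^{T}\right)\left|\psi_{1}\right\rangle $ for the maximally entangled state, these membership conditions become the matrix relations $A_{k}B_{k}^{T}\in\mathbb{C}\,\mathrm{diag}\left(\cos\alpha,\sin\alpha\right)$ and $A_{k}\sigma_{x}B_{k}^{T}\in\mathbb{C}\,\mathrm{diag}\left(\sin\alpha,-\cos\alpha\right)$. If some branch had $A_{k}B_{k}^{T}=\lambda\,\mathrm{diag}\left(\cos\alpha,\sin\alpha\right)$ with $\lambda\neq0$, then $A_{k}$ and $B_{k}$ would be invertible and, eliminating $B_{k}^{T}$, one would get $A_{k}\sigma_{x}A_{k}^{-1}=c\,\mathrm{diag}\left(\tan\alpha,-\cot\alpha\right)$ with $c\neq0$; since the left-hand side is traceless this forces $\tan\alpha=\cot\alpha$, i.e.\ $\alpha=\pi/4$, contradicting our choice. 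Hence every branch annihilates $\left|\psi_{1}\right\rangle $, whence $\bm L\left(\left|\psi_{1}\right\rangle \left\langle \psi_{1}\right|\right)=0\neq\left|\phi_{1}\right\rangle \left\langle \phi_{1}\right|$, a contradiction. Thus no separable---and in particular no LOCC---operation achieves $S_{\psi}\rightarrow S_{\phi}$, while $(a)$, $(b)$ and $(c)$ all hold.

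I expect the impossibility half to be the main obstacle: the checks of $(a)$ and $(c)$ are trivial and $(b)$ is a short concurrence computation, but excluding every adaptive LOCC (indeed separable) protocol requires collapsing it to a finite-dimensional algebraic obstruction. The two features that make this work are the purity of the outputs, which forces each Kraus branch to act proportionally on each input, and the ricochet identity, which converts the branch conditions into a pair of matrix equations that are inconsistent once $\alpha\neq\pi/4$. One auxiliary point deserves care: condition $(b)$ is phrased for every entanglement measure, so its verification should either be carried out for a fixed standard measure---the entanglement of formation, as in Propositions~\ref{set-relations-A} and~\ref{set-relations-B}---or be accompanied by the remark that $\rho_{\psi,\bm p}$ and $\rho_{\phi,\bm p}$ are two-qubit states whose entanglement content is governed entirely by the concurrence.
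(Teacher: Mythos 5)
Your proposal is correct, and it is in essence the paper's own counterexample: your input pair $\{\frac{1}{\sqrt2}(|00\rangle+|11\rangle),\frac{1}{\sqrt2}(|01\rangle+|10\rangle)\}$ is just $(H\otimes H)$ applied to the paper's $\{\frac{1}{\sqrt2}(|00\rangle\pm|11\rangle)\}$, the output pair is identical (with $\alpha,\beta$ written as $\cos\alpha,\sin\alpha$), and the verification of (a), (b), (c) proceeds exactly as in the paper (Nielsen for maximally entangled inputs, the same concurrence formulas $|1-2p|$ versus $|1-2p|\sin2\alpha$, and local distinguishability of two orthogonal pure states). The overall impossibility strategy also matches: purity of the outputs forces each separable Kraus branch to satisfy $(A_k\otimes B_k)|\psi_i\rangle\in\mathbb{C}|\phi_i\rangle$, and one concludes via $\text{LOCC}\subset\text{SEP}$. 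Where you genuinely diverge is in how the branch-level contradiction is extracted. The paper adds and subtracts the two branch equations, insists that the images of $|00\rangle$ and $|11\rangle$ be product states, and derives incompatible conditions $\mu_{k1}/\mu_{k2}\in\{-\beta/\alpha,\alpha/\beta\}$ versus $\{\beta/\alpha,-\alpha/\beta\}$, plus a separate argument for the case where one coefficient vanishes. You instead vectorize via the ricochet identity, turning the two membership conditions into $A_kB_k^T\propto\mathrm{diag}(\cos\alpha,\sin\alpha)$ and $A_k\sigma_xB_k^T\propto\mathrm{diag}(\sin\alpha,-\cos\alpha)$, and kill the nontrivial case with the tracelessness of $A_k\sigma_xA_k^{-1}$; this is cleaner algebra and absorbs the ``one coefficient zero'' case automatically, since once every branch annihilates $|\psi_1\rangle$ trace preservation is violated. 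Two small points to tighten: state explicitly that the proportionality constant $c$ is nonzero because $A_k$, $\sigma_x$, $B_k^T$ are all invertible (so the product cannot vanish), and note that your caveat about condition (b) being stated for ``any well-defined entanglement measure'' applies equally to the paper, which likewise verifies it only through the concurrence.
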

\begin{proof}
Consider the input and output sets $S_{\psi}=\left\{ \left|\psi_{1}\right\rangle ,\left|\psi_{2}\right\rangle \right\} $
and $S_{\phi}=\left\{ \left|\phi_{1}\right\rangle ,\left|\phi_{2}\right\rangle \right\} $,
where 
\begin{eqnarray*}
\left|\psi_{1}\right\rangle =\frac{1}{\sqrt{2}}\left(\left|00\right\rangle +\left|11\right\rangle \right) & \hspace{1em}\hspace{1em} & \left|\phi_{1}\right\rangle =\alpha\left|00\right\rangle +\beta\left|11\right\rangle \\
\left|\psi_{2}\right\rangle =\frac{1}{\sqrt{2}}\left(\left|00\right\rangle -\left|11\right\rangle \right) & \hspace{1em}\hspace{1em} & \left|\phi_{2}\right\rangle =\beta\left|00\right\rangle -\alpha\left|11\right\rangle 
\end{eqnarray*}
where $\alpha>\beta>0$ and $\alpha^{2}+\beta^{2}=1$. 

First, we will show all three conditions in Proposition \ref{necessary-conditions}
are satisfied. Then we will prove that the desired set transformation
is not possible by deterministic LOCC. 

Condition $\left(a\right)$ holds as Nielsen's theorem is satisfied
in each case. To show that $\left(b\right)$ holds we proceed as follows.
Consider the density operators $\rho_{\psi,\bm{p}}$ and $\rho_{\phi,\bm{p}}$,
where $\bm{p}=\left\{ p,1-p\right\} $, $p\in\left(0,1\right)$. The
respective concurrences are given by 
\begin{alignat}{1}
\bm{C}\left(\rho_{\psi,\bm{p}}\right) & =\begin{array}{c}
\left|1-2p\right|,\;p\neq\frac{1}{2}\\
0,\;p=\frac{1}{2}
\end{array}\label{concurrence-rho-psi}\\
\nonumber \\
\bm{C}\left(\rho_{\phi,\bm{p}}\right) & =\begin{array}{c}
2\alpha\beta\left|1-2p\right|,\;p\neq\frac{1}{2}\\
0,\;p=\frac{1}{2}
\end{array}\label{concurrence-rho-phi}
\end{alignat}
It follows from (\ref{concurrence-rho-psi}) and (\ref{concurrence-rho-phi})
that $\bm{C}\left(\rho_{\psi,\bm{p}}\right)\geq\bm{C}\left(\rho_{\phi,\bm{p}}\right)$
for all $\bm{p}$. Therefore, $\left(b\right)$ is satisfied. Since
both sets are distinguishable, locally \citep{Walgate+2000} or globally,
condition $\left(c\right)$ is satisfied as well.

We now prove that there does not exist a separable operation $\mathcal{\bm{S}}$
satisfying $\mathcal{\bm{S}}\left(\psi_{i}\right)=\phi_{i}$, where
$\psi_{i}=\left|\psi_{i}\right\rangle \left\langle \psi_{i}\right|$
and $\phi_{i}=\left|\phi_{i}\right\rangle \left\langle \phi_{i}\right|$. 

Suppose, on the contrary, there exists a separable operation $\mathcal{\bm{S}}=\left\{ \mathcal{A}_{k}\otimes\mathcal{B}_{k}\right\} $,
where $\mathcal{A}_{k}\otimes\mathcal{B}_{k}$ are Kraus operators
satisfying $\sum_{k}\mathcal{A}_{k}^{\dagger}\mathcal{A}_{k}\otimes\mathcal{B}_{k}^{\dagger}\mathcal{B}_{k}=\mathcal{I}$,
$\mathcal{I}$ being the identity operator, such that $\mathcal{\bm{S}}\left(\psi_{i}\right)=\phi_{i}$
for $i=1,2$; that is: 
\begin{alignat}{1}
\mathcal{\bm{S}}\left(\psi_{i}\right) & =\sum_{k}\left(\mathcal{A}_{k}\otimes\mathcal{B}_{k}\right)\psi_{i}\left(\mathcal{A}_{k}^{\dagger}\otimes\mathcal{B}_{k}^{\dagger}\right)=\phi_{i},i=1,2.\label{Sep-on-inputs}
\end{alignat}
From (\ref{Sep-on-inputs}), it follows that 
\begin{alignat}{1}
\mathcal{A}_{k}\otimes\mathcal{B}_{k}\left|\psi_{1}\right\rangle  & =\mu_{k1}\left|\phi_{1}\right\rangle ,\label{muk1}\\
\mathcal{A}_{k}\otimes\mathcal{B}_{k}\left|\psi_{2}\right\rangle  & =\mu_{k2}\left|\phi_{2}\right\rangle ,\label{muk2}
\end{alignat}
where $\mu_{k1},\mu_{k2}$ are complex numbers. For a given $k$,
we need to consider two possibilities: $\mu_{k1},\mu_{k2}$ are both
nonzero or one of them is zero. 

First suppose that both $\mu_{k1}$ and $\mu_{k2}$ are nonzero. Adding
(\ref{muk1}) and (\ref{muk2}) we get 
\begin{alignat}{1}
\mathcal{A}_{k}\otimes\mathcal{B}_{k}\left(\left|\psi_{1}\right\rangle +\left|\psi_{2}\right\rangle \right) & =\mu_{k1}\left|\phi_{1}\right\rangle +\mu_{k2}\left|\phi_{2}\right\rangle .\label{muk1+muk2}
\end{alignat}
Simplifying the above equation we arrive at 
\begin{alignat}{1}
\mathcal{A}_{k}\otimes\mathcal{B}_{k}\left|00\right\rangle  & =\frac{1}{\sqrt{2}}\left[\left(\mu_{k1}\alpha+\mu_{k2}\beta\right)\left|00\right\rangle +\left(\mu_{k1}\beta-\mu_{k2}\alpha\right)\left|11\right\rangle \right].\label{Simplified muk1+muk2}
\end{alignat}
Now the LHS of (\ref{Simplified muk1+muk2}) is a product state. That
means the state on the RHS must also be a product state. Since this
state already enjoys biorthogonal decomposition, the eigenvalues of
the reduced density matrices are easy to obtain. Requiring one of
eigenvalues to be zero as the Schmidt rank of a product state is one,
we find that either of the conditions
\begin{eqnarray}
\ensuremath{\frac{\mu_{k1}}{\mu_{k2}}} & = & -\frac{\beta}{\alpha}\text{ or }\ensuremath{\frac{\alpha}{\beta}}.\label{muk1/muk2-first}
\end{eqnarray}
need to hold. 

Now subtracting (\ref{muk2}) from (\ref{muk1}) and simplifying we
find that 
\begin{alignat}{1}
\mathcal{A}_{k}\otimes\mathcal{B}_{k}\left|11\right\rangle  & =\frac{1}{\sqrt{2}}\left[\left(\mu_{k1}\alpha-\mu_{k2}\beta\right)\left|00\right\rangle +\left(\mu_{k1}\beta+\mu_{k2}\alpha\right)\left|11\right\rangle \right].\label{muk1-muk2}
\end{alignat}
Since the LHS of (\ref{muk1-muk2}) is a product state, the RHS of
(\ref{muk1-muk2}) must also be a product state, which requires us
to satisfy 
\begin{eqnarray}
\ensuremath{\frac{\mu_{k1}}{\mu_{k2}}} & = & \frac{\beta}{\alpha}\text{ or }\ensuremath{-\frac{\alpha}{\beta}.}\label{muk1/muk2-second}
\end{eqnarray}
Now observe that the conditions (\ref{muk1/muk2-first}) and (\ref{muk1/muk2-second})
both cannot be simultaneously satisfied unless $\alpha=\beta$. Since
$\alpha>\beta$ it follows that (\ref{muk1}) and (\ref{muk2}) both
cannot hold for nonzero $\mu_{k1}$ and $\mu_{k2}$. Thus the desired
set transformation is not possible by a separable operation satisfying
both (\ref{muk1}) and (\ref{muk2}) for nonzero $\mu_{k1}$ and $\mu_{k2}$. 

Next, suppose that $\mu_{k1}\neq0$ but $\mu_{k2}=0$. Then 
\begin{alignat}{1}
\mathcal{A}_{k}\otimes\mathcal{B}_{k}\left|\psi_{1}\right\rangle  & =\frac{1}{\sqrt{2}}\left(\mathcal{A}_{k}\otimes\mathcal{B}_{k}\left|00\right\rangle +\mathcal{A}_{k}\otimes\mathcal{B}_{k}\left|11\right\rangle \right)=\mu_{k1}\left|\phi_{1}\right\rangle ,\label{muk1nonzero}\\
\mathcal{A}_{k}\otimes\mathcal{B}_{k}\left|\psi_{2}\right\rangle  & =\frac{1}{\sqrt{2}}\left(\mathcal{A}_{k}\otimes\mathcal{B}_{k}\left|00\right\rangle -\mathcal{A}_{k}\otimes\mathcal{B}_{k}\left|11\right\rangle \right)=0.\label{muk2zero}
\end{alignat}
Adding (\ref{muk1nonzero}) and (\ref{muk2zero}) and simplifying
we find that the LHS is a product state, whereas the RHS is an entangled
state as $\mu_{k1}\neq0$. Hence, (\ref{muk1nonzero}) and (\ref{muk2zero})
cannot be both satisfied. A similar argument holds for the case where
$\mu_{k1}=0$ but $\mu_{k2}\neq0$. Therefore, there does not exist
a separable operation which could achieve the desired transformation. 

So we have proved that there does not exist a separable operation
satisfying (\ref{muk1}) and (\ref{muk2}), which implies that the
desired set transformation using a separable operation is impossible.
Noting that LOCC is a strict subset of separable operations, the proof
is therefore complete. 
\end{proof}
Propositions \ref{set-relations-A}\textendash \ref{insufficient}
suggest that one can now classify possible input-output pairs of sets
based on the satisfiability of the necessary conditions. This is shown
in Fig.$\,$1. 
\begin{figure}
\begin{centering}
\includegraphics[scale=0.5]{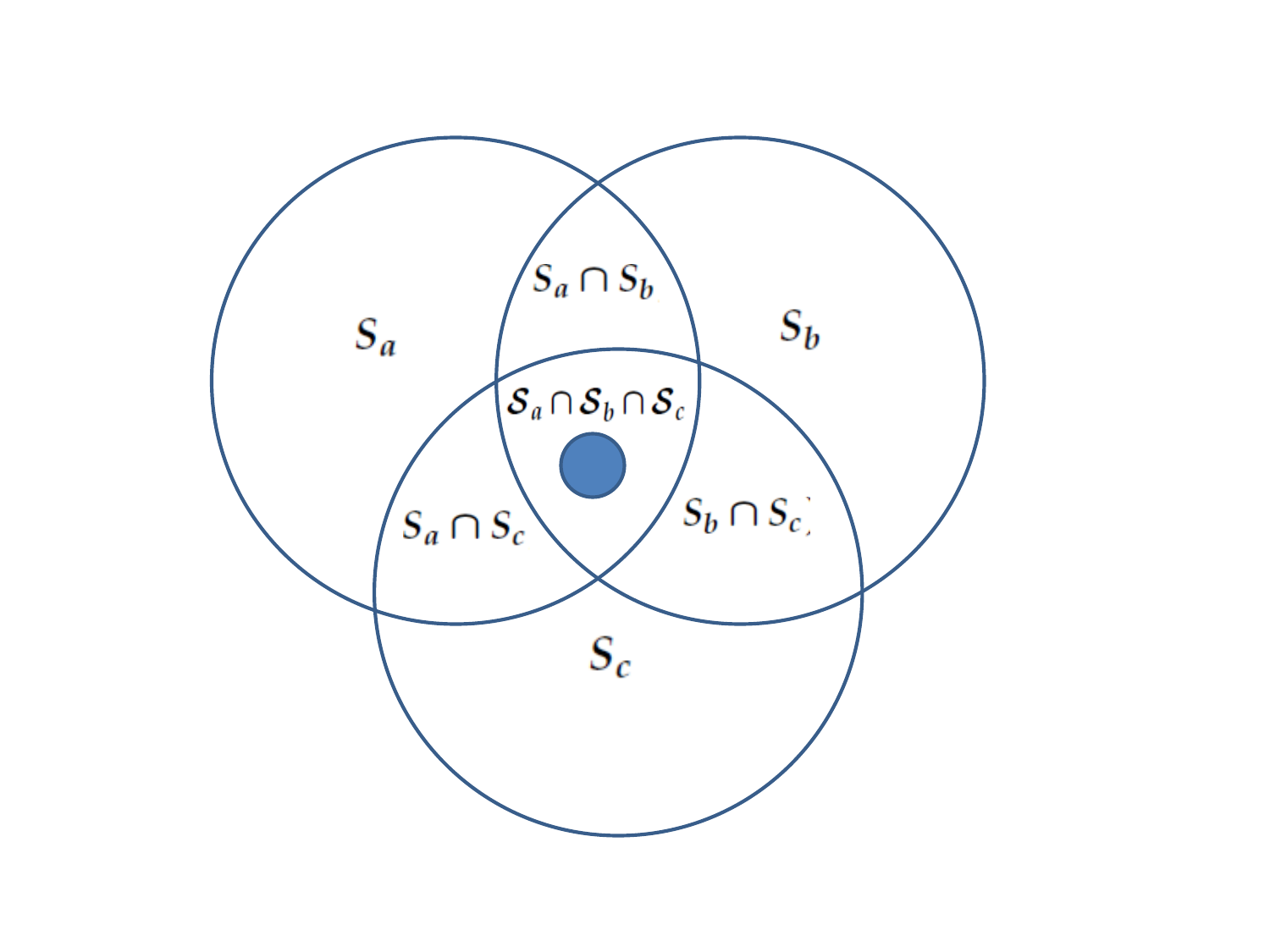}\caption{The shaded region represents the set of all input-output pairs of
sets for which the set transformation is possible using deterministic
LOCC. Note that, as discussed in the text, all regions are nonempty. }
\par\end{centering}
\end{figure}
\\

\section{Set transformations and classes of quantum operations}

Consider a bipartite system $\mathcal{H}=\mathcal{H}_{A}\otimes\mathcal{H}_{B}$.
Let $\mathcal{\bm{Q}}=\left\{ \mathcal{Q}_{1},\mathcal{Q}_{2},\dots\right\} $
be a quantum operation acting on the set of density operators $\mathcal{D}\left(\mathcal{H}\right)$.
Then for any density operator $\rho\in\mathcal{D}\left(\mathcal{H}\right)$
we have 
\begin{alignat*}{1}
\rho_{i}^{\prime} & =\mathcal{Q}_{i}\left(\rho\right).
\end{alignat*}
 Recall the definitions: 
\begin{itemize}
\item $\mathcal{\bm{Q}}\in\text{SEP}$ if each $\mathcal{Q}_{i}$ is a separable
map, which implies that $\rho_{i}^{\prime}$ is separable whenever
$\rho$ is separable. 
\item $\mathcal{\bm{Q}}\in\text{PPT}$ if each $\mathcal{Q}_{i}$ is a PPT
map, which implies that $\rho_{i}^{\prime}$ is PPT whenever $\rho$
is PPT.
\end{itemize}
The following relations hold:
\begin{equation}
\text{LOCC}\subset\text{SEP}\subset\text{PPT}\subset\text{ALL},\label{inclusion-relations}
\end{equation}
where ALL is simply the set of all quantum operations $\left\{ \mathcal{\bm{Q}}\right\} $.
The inclusions are strict. Operationally this means the following:
Let $X=\left\{ \mathcal{\bm{X}}\right\} $ and $Y=\left\{ \mathcal{\bm{Y}}\right\} $
denote two classes of quantum operations. Then $X\subset Y$ means
$Y$ is more powerful than $X$ in the sense that every $\mathcal{\bm{X}}\in X$
can be realized by some $\mathcal{\bm{Y}}\in Y$ but the converse
does not hold, i.e., not every $\mathcal{\bm{Y}}\in Y$ can be realized
by some $\mathcal{\bm{X}}\in X$. 

The purpose of this section is to show that the relations (\ref{inclusion-relations})
also hold in the context of set transformations. We begin by noting
that (\ref{inclusion-relations}) hold in the case of distinguishing
quantum states. 
\begin{lem}
\label{state-discrimination-lemma} Let $X,Y\in\left\{ \text{LOCC},\text{SEP},\text{PPT},\text{ALL}\right\} $
satisfy $X\subset Y$. Then there exists a set $S\left(X,Y\right)$
of bipartite orthogonal states that can be perfectly distinguished
with certainty by some $\mathcal{\bm{Y}}\in Y$ but not by any $\mathcal{\bm{X}}\in X.$ 
\end{lem}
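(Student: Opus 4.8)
The plan is to exhibit, for each strict inclusion $X \subset Y$ among $\text{LOCC} \subset \text{SEP} \subset \text{PPT} \subset \text{ALL}$, an explicit finite set of mutually orthogonal bipartite pure states that can be perfectly discriminated by some operation in $Y$ but not by any operation in $X$. Since all four classes contain the identity-and-prepare strategy whenever a global measurement can discriminate an orthogonal set, every orthogonal set is distinguishable by $\text{ALL}$; so the content is entirely on the $X$ side, i.e.\ producing sets that are \emph{not} perfectly distinguishable by the smaller class. I would handle the three cases by pointing to known constructions from the state-discrimination literature already cited in the paper.

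First, for $X = \text{LOCC}$, $Y = \text{SEP}$: take $S(\text{LOCC},\text{SEP})$ to be the ``nonlocality without entanglement'' ensemble of nine orthogonal product states in $\mathbb{C}^3 \otimes \mathbb{C}^3$ from \citep{NLWE}. By construction this set is perfectly distinguishable by a separable measurement (indeed each state is a product state, so there is a separable POVM with those rank-one product elements) but provably not perfectly distinguishable by LOCC. For $X = \text{SEP}$, $Y = \text{PPT}$: here I would use an orthogonal set whose discrimination requires a measurement that is PPT but not separable — for instance the relevant examples in the quantum-data-hiding / UPB-based literature (e.g.\ \citep{Cosentino-2013,Cosentino-Russo-2014,BGK-2011}), where a pair or small family of orthogonal (possibly entangled) states is perfectly distinguished by a PPT measurement while any separable measurement incurs nonzero error. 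For $X = \text{PPT}$, $Y = \text{ALL}$: take two orthogonal maximally entangled states (or a suitable small set) for which the optimal PPT discrimination probability is strictly below $1$; concrete examples again appear in \citep{Cosentino-2013,Yu-Duan-2012}, giving a set perfectly distinguishable globally but not by any PPT operation.

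Having fixed these three sets, the remaining cases of the lemma follow by transitivity: if $X \subset Y$ with $Y$ not adjacent to $X$ in the chain, one simply takes $S(X,Y)$ to be the set $S(X,X^{+})$ constructed above for the immediate successor $X^{+}$ of $X$ — it is distinguishable by $X^{+} \subseteq Y$ and not by $X$, which is exactly what is required. So it suffices to verify the three adjacent cases.

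The main obstacle is the $\text{SEP} \subset \text{PPT}$ case: unlike the other two, there is no single textbook ensemble, and one must be careful to quote (or re-derive) a bound showing that \emph{every} separable measurement fails to distinguish the chosen orthogonal set perfectly, while a PPT measurement succeeds. The cleanest route is to invoke an existing SDP-based or hiding-states result rather than prove the separability lower bound from scratch; if a fully self-contained argument is wanted, one would set up the discrimination as a semidefinite feasibility problem over separable (resp.\ PPT) POVMs and exhibit the gap, which is routine in principle but notation-heavy. I would therefore present the three examples with citations and only sketch why each fails for the smaller class, deferring the separability bound to the cited sources.
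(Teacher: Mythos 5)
Your overall strategy is the same as the paper's: observe that it suffices to treat the three adjacent inclusions $\text{LOCC}\subset\text{SEP}$, $\text{SEP}\subset\text{PPT}$, $\text{PPT}\subset\text{ALL}$ and exhibit one known counterexample set for each, with non-adjacent pairs handled trivially. The first case (the nonlocality-without-entanglement product basis of \citep{NLWE}) matches the paper exactly. But two of your instantiations have genuine problems. For $\text{PPT}\subset\text{ALL}$ you propose ``two orthogonal maximally entangled states'': this cannot work, because \emph{any} two orthogonal pure states can be perfectly distinguished by LOCC \citep{Walgate+2000} (a result invoked elsewhere in this very paper), hence by SEP and PPT. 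A larger set is unavoidable; the paper uses the full two-qubit Bell basis, with PPT-indistinguishability obtained by adapting the argument of \citep{Bell-indistinguishable}, and the four-state examples of \citep{Yu-Duan-2012,Cosentino-2013} would also do. Your hedge ``or a suitable small set'' gestures at correct examples, but the concrete choice you state is false.

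For $\text{SEP}\subset\text{PPT}$, the sources you lean on (\citep{Cosentino-2013,Cosentino-Russo-2014,BGK-2011}) do not supply what the lemma needs: those works establish PPT- or LOCC-indistinguishability, i.e.\ gaps relative to the global or LOCC classes, not a set that is \emph{perfectly} distinguishable by a PPT measurement yet not by any separable one. Since the entire content of the lemma in this case is producing (or correctly citing) exactly such a set, deferring it to these references leaves the middle case unproved --- as you yourself suspect when you call it the main obstacle. The paper resolves it by quoting the explicit three-state set in $4\otimes 4$ from \citep{Yu-duan-PPT-2014}, which is proved there to be PPT-distinguishable but not SEP-distinguishable; \citep{B-IQC-2015} is another legitimate source for such a separation. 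Replace your citations for this case with one of these (and fix the PPT-vs-ALL example), and your proof becomes essentially identical to the paper's.
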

The proof of Lemma \ref{state-discrimination-lemma} follows from
examples in the existing literature. Note that it suffices to cover
the cases $\text{LOCC}\subset\text{SEP}$, $\text{SEP}\subset\text{PPT}$,
and $\text{PPT}\subset\text{ALL}.$ We will give one example for each
of them. 
\begin{itemize}
\item $S\left(\text{LOCC},\text{SEP}\right)$ is an orthogonal product basis
exhibiting quantum nonlocality without entanglement \citep{NLWE}.
The basis is locally indistinguishable \citep{NLWE,Walgate-Hardy-2002}
but perfectly distinguishable using a separable measurement (obviously).
Such bases exist in all $\mathcal{H}_{A}\otimes\mathcal{H}_{B}$ with
local dimensions $d_{A},d_{B}\geq3$.
\item $S\left(\text{SEP},\text{PPT}\right)$ is the following set of three
states from $\mathcal{H}_{A}\otimes\mathcal{H}_{B}$ with local dimensions
$d_{A}=d_{B}=4$ \citep{Yu-duan-PPT-2014}:
\begin{flalign*}
\left|\psi_{1}\right\rangle  & =\frac{1}{\sqrt{2}}\left(\left|00\right\rangle +\left|11\right\rangle \right)\otimes\left(\sqrt{\frac{2}{3}}\left|00\right\rangle +\sqrt{\frac{1}{3}}\left|11\right\rangle \right)\\
\left|\psi_{2}\right\rangle  & =\frac{1}{\sqrt{2}}\left(\left|00\right\rangle -\left|11\right\rangle \right)\otimes\left(\sqrt{\frac{2}{3}}\left|00\right\rangle +\sqrt{\frac{1}{3}}\left|11\right\rangle \right)\\
\left|\psi_{3}\right\rangle  & =\frac{1}{\sqrt{2}}\left(\left|01\right\rangle +\left|10\right\rangle \right)\otimes\left(\sqrt{\frac{2}{3}}\left|00\right\rangle +\sqrt{\frac{1}{3}}\left|11\right\rangle \right)
\end{flalign*}
\item $S\left(\text{PPT},\text{ALL}\right)$ is the two-qubit Bell basis.
The proof that the Bell basis is not PPT distinguishable follows by
appropriately modifying the argument in \citep{Bell-indistinguishable}. 
\end{itemize}
We will make use of Lemma \ref{state-discrimination-lemma} to prove
the following proposition.
\begin{prop}
\label{inclusion proposition} Let $X,Y\in\left\{ \text{LOCC},\text{SEP},\text{PPT},\text{ALL}\right\} $
satisfy the relation $X\subset Y$. Then there exist input-output
pairs of sets $S_{\psi}$ and $S_{\phi}$ for which the transformation
$S_{\psi}\rightarrow S_{\phi}$ is possible with unit probability
with a $\mathcal{\bm{Y}}\in Y$ but not with any $\mathcal{\bm{X}}\in X$.
\end{prop}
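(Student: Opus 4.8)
The plan is to reduce Proposition \ref{inclusion proposition} directly to the state-discrimination separation of Lemma \ref{state-discrimination-lemma}. Given $X \subset Y$, let $S(X,Y) = \{\ket{\chi_1},\dots,\ket{\chi_m}\}$ be the set of mutually orthogonal bipartite states supplied by that lemma: perfectly distinguishable by some $\bm{\mathcal{Y}} \in Y$ but not by any $\bm{\mathcal{X}} \in X$. The idea is to take this as the input set, $S_\psi = S(X,Y)$, and let the output set be a set of mutually orthogonal \emph{product} states that merely \emph{flags} the index, say $S_\phi = \{\ket{0}\ket{0}, \ket{1}\ket{0}, \dots, \ket{m-1}\ket{0}\}$ (embedded in a large enough local space, and padded so the dimensions match). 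One then shows: (i) $S_\psi \to S_\phi$ \emph{is} achievable in $Y$, and (ii) $S_\psi \to S_\phi$ is \emph{not} achievable in $X$.

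For (i), since $\bm{\mathcal{Y}} \in Y$ perfectly distinguishes the $\ket{\chi_i}$, the protocol is the ``identify and prepare'' strategy: run $\bm{\mathcal{Y}}$, obtain the index $i$ with certainty, then (locally, hence in $Y$) prepare $\ket{i}\ket{0}$. Concretely the map $\bm{\mathcal{L}}(\cdot) = \sum_i \ket{i0}\bra{\chi_i}(\cdot)\ket{\chi_i}\bra{i0}$ is completed to a trace-preserving operation on all of $\mathcal{D}(\mathcal{H})$ by using the measurement operators of $\bm{\mathcal{Y}}$, and it lies in $Y$ because $Y$ is closed under composition with local state preparation and $\bm{\mathcal{Y}}$ already belongs to $Y$. (For $Y = \text{PPT}$ or $\text{SEP}$ one notes that a perfectly distinguishing measurement in that class composed with a classically-conditioned product-state preparation is again in that class.)

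For (ii), I argue by contradiction: suppose some $\bm{\mathcal{X}} \in X$ satisfies $\bm{\mathcal{X}}(\chi_i) = \ket{i0}\bra{i0}$ for all $i$. Then $\bm{\mathcal{X}}$, followed by the trivial local measurement in the computational basis on Alice's side (which is in LOCC, hence in $X$), perfectly identifies $i$ — i.e. it constitutes an element of $X$ that perfectly distinguishes $S(X,Y)$, contradicting Lemma \ref{state-discrimination-lemma}. This uses only that $X$ is closed under post-composition with a local projective measurement, which holds for every class in $\{\text{LOCC},\text{SEP},\text{PPT}\}$. Since Lemma \ref{state-discrimination-lemma} covers the three generating inclusions $\text{LOCC}\subset\text{SEP}$, $\text{SEP}\subset\text{PPT}$, $\text{PPT}\subset\text{ALL}$, and any $X\subset Y$ in the chain contains one of these as a sub-step (on which the separating set still works, being distinguishable in the larger class and indistinguishable in the smaller), the claim follows for all admissible pairs $(X,Y)$.

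The main obstacle I anticipate is bookkeeping rather than conceptual: one must make sure the input and output sets live in a common space $\mathcal{H}_A \otimes \mathcal{H}_B$ with $|S_\psi| = |S_\phi| = n$ (pad with ancillary dimensions as needed), and one must be careful that the ``closed under composition with local operations'' property is invoked correctly for each class — for SEP and PPT this is standard but worth one sentence of justification, and for the $\text{PPT}\subset\text{ALL}$ case the output product states are trivially PPT-preparable so the only content is the non-distinguishability of the Bell basis by PPT, which is exactly what Lemma \ref{state-discrimination-lemma} supplies. No hard calculation is needed; the proof is essentially a wrapper turning a distinguishability separation into a set-transformation separation.
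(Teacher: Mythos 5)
Your proposal is correct and follows essentially the same route as the paper: take the input set to be the separating set $S(X,Y)$ from Lemma \ref{state-discrimination-lemma}, take the output set to be orthogonal product (computational-basis) states, achieve the transformation in $Y$ by identify-and-prepare, and rule it out in $X$ by noting that local discrimination of the output states would yield an $X$-discrimination of $S(X,Y)$, a contradiction. The extra remarks on trace-preserving completion, closure under composition, and reduction to the three generating inclusions only make explicit what the paper leaves implicit.
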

\begin{proof}
Let $X,Y\in\left\{ \text{LOCC},\text{SEP},\text{PPT},\text{ALL}\right\} $
satisfy the relation $X\subset Y$. We choose the input set $S_{\psi}$
as $S\left(X,Y\right)$ and assume that its cardinality is $n$. Let
the output set $S_{\phi}$ contain any $n$ orthogonal product states
from the computational basis. Then: 
\begin{itemize}
\item The members of $S\left(X,Y\right)$ can be perfectly distinguished
with certainty by some $\mathcal{\bm{Y}}\in Y$ but not by any $\mathcal{\bm{X}}\in X$.
And, we know from Lemma \ref{state-discrimination-lemma} that such
$S\left(X,Y\right)$ exists for any choice of $X,Y\in\left\{ \text{LOCC},\text{SEP},\text{PPT},\text{ALL}\right\} $
satisfying $X\subset Y$. 
\item The members of $S_{\phi}$ are locally distinguishable, and therefore
both SEP-distinguishable and PPT-distinguishable. 
\end{itemize}
Let us first prove that the transformation $S\left(X,Y\right)\rightarrow S_{\phi}$
is achievable with certainty by some $\mathcal{\bm{Y}}\in Y$. The
protocol is simple. We know that the members of $S\left(X,Y\right)$
are distinguishable by some $\mathcal{\bm{Y}}\in Y$. So in the first
step we simply identify the input state. Since the corresponding output
state is product, it can be prepared by LOCC. Therefore, the desired
transformation is possible. 

We now show that the transformation $S\left(X,Y\right)\rightarrow S_{\phi}$
is not achievable with certainty by any $\mathcal{\bm{X}}\in X$.
Suppose, on the contrary, the transformation is, in fact, possible.
Then for every $i$, $i=1,\dots,n$, we have $\left|\psi_{i}\right\rangle \rightarrow\left|\phi_{i}\right\rangle $,
where $\left|\psi_{i}\right\rangle \in S\left(X,Y\right)$ and $\left|\phi_{i}\right\rangle \in S_{\phi}$.
Since $S_{\phi}$ is locally distinguishable, we can perform a LOCC
measurement to identify the output state, which will reveal the identity
of the input state given in the beginning. Therefore, we are able
to perfectly distinguish the members of $S\left(X,Y\right)$ with
certainty using an $\bm{\mathcal{X}}\in X$. This contradicts the
fact that $S\left(X,Y\right)$ is not perfectly distinguishable with
certainty by any $\mathcal{\bm{X}}\in X$. Hence, the proof is complete. 
\end{proof}

\section{Conclusions}

The set transformation problem in quantum information theory deals
with the question of the existence of a physical transformation that
transforms a given set of input states into a set of output states.
In this paper, we considered this problem within the LOCC framework
and discussed the unique features which arise from orthogonality,
entanglement, and local distinguishability of the states under consideration.
Specifically, we investigated the problem of transforming a set of
pure bipartite states into another using deterministic LOCC. We obtained
the necessary conditions for the existence of such a transformation
using LOCC constraints on state transformation, entanglement, and
distinguishability. We proved the conditions are independent but not
sufficient. We discussed their satisfiability and classified all possible
input-output pairs of sets accordingly. We also showed the strict
inclusion relations that hold for LOCC, separable, and PPT operations
apply in the case of set transformations. 

We, however, did not attempt to address the most general LOCC set
transformation problem. In particular, we left out mixed states and
multipartite systems. Not much is known about local mixed-state transformations
and local distinguishability, and entanglement of mixed states is
also extremely hard to compute. In fact, the main tools we used in
this paper will not work that well for mixed states. These are the
reasons why we did not consider mixed states.

Multipartite systems also pose considerable challenges, as neither
state transformation, entanglement, nor distinguishability properties
are well understood, at least not as well as bipartite systems. However,
we believe that many of the results presented in this paper could
be extended to problems involving multipartite pure states without
much difficulty. This scenario could be an exciting avenue for further
research. 
\begin{acknowledgement*}
R.S. acknowledges financial support from DST Project No. DST/ICPS/QuST/
Theme-2/2019/General Project Q-90. S.H. was supported by a postdoctoral
fellowship of Harish-Chandra Research Institute, Prayagraj (Allahabad)
when part of this work was completed. S.H. is now supported by the
\textquotedblleft Quantum Optical Technologies\textquotedblright{}
project, carried out within the International Research Agendas programme
of the Foundation for Polish Science co-financed by the European Union
under the European Regional Development Fund.
\end{acknowledgement*}

\end{document}